\newcommand{\mybpara}[1]{\noindent{\textbf{#1}}\xspace}
\newcommand{\MSE}{\ensuremath{\mathsf{MSE}}\xspace}
\newcommand{\IGR}{\ensuremath{\mathsf{IGR}}\xspace}
\newcommand{\Diffstats}{\ensuremath{\mathsf{Diffstats}}\xspace}
\newcommand{\ASD}{\ensuremath{\mathsf{ASD}}\xspace}
\newcommand{\RSN}{\ensuremath{\mathsf{RSN}}\xspace}
\newcommand{\blackcirclea}[1]{%
    \begin{tikzpicture}[baseline=(text.base)]
        \node[
            fill=black, 
            text=white, 
            shape=circle, 
            inner sep=0.5pt, 
            minimum size=10pt,
            align=center,
        ] (text) {\raisebox{1.4pt}{\textbf{#1}}};
    \end{tikzpicture}%
}
\newcommand{\blackcircleb}[1]{%
    \begin{tikzpicture}[baseline=(text.base)]
        \node[
            fill=black, 
            text=white, 
            shape=circle, 
            inner sep=0.5pt, 
            minimum size=10pt,   
            font=\fontsize{8}{8}\selectfont,
            align=center,
        ] (text) {\raisebox{1pt}{\textbf{#1}}};
    \end{tikzpicture}%
}
\newtheorem{definition}{Definition}
\newcounter{alphasubsection}
\newcommand\blfootnote[1]{%
  \begingroup
  \renewcommand\thefootnote{}\footnote{#1}%
  \addtocounter{footnote}{-1}%
  \endgroup
}
\renewcommand\footnotetextcopyrightpermission[1]{} 
\begin{document}

\date{}

\title{Mitigating Data Poisoning Attacks to Local Differential Privacy}
 \author{Xiaolin Li}
 \affiliation{%
   \institution{Purdue University}
   \city{West Lafayette}
   \state{IN}
   \country{USA}
 }
 \email{li4955@purdue.edu}

  \author{Ninghui Li}
 \affiliation{%
   \institution{Purdue University}
   \city{West Lafayette}
   \state{IN}
   \country{USA}
 }
 \email{ninghui@purdue.edu}
 
 \author{Boyang Wang}
 \affiliation{%
   \institution{The University of Cincinnati}
   \city{Cincinnati}
   \state{OH}
   \country{USA}
 }
 \email{boyang.wang@uc.edu}

  \author{Wenhai Sun}
 \affiliation{%
   \institution{Purdue University}
   \city{West Lafayette}
   \state{IN}
   \country{USA}
 }
 \email{whsun@purdue.edu}








\begin{abstract}
The distributed nature of local differential privacy (LDP) invites data poisoning attacks and poses unforeseen threats to the underlying LDP-supported applications. In this paper, we propose a comprehensive mitigation framework for popular frequency estimation, which contains a suite of novel defenses, including malicious user detection, attack pattern recognition, and damaged utility recovery. 
In addition to existing attacks, we explore new adaptive adversarial activities for our mitigation design. For detection, we present a new method to precisely identify bogus reports and thus LDP aggregation can be performed over the ``clean'' data. When the attack behavior becomes stealthy and direct filtering out malicious users is difficult, we further propose a detection that can effectively recognize hidden adversarial patterns, thus facilitating the decision-making of service providers. These detection methods require no additional data and attack information and incur minimal computational cost. Our experiment demonstrates their excellent performance and substantial improvement over previous work in various settings. In addition, we conduct an empirical analysis of LDP post-processing for corrupted data recovery and propose a new post-processing method, through which we reveal new insights into protocol recommendations in practice and key design principles for future research. 
\end{abstract}





\maketitle
\pagestyle{plain} 


\section{Introduction}
Local differential privacy is a promising privacy-enhancing tool~\cite{duchi2013local,wang2017locally,li2020estimating,duchi2018minimax,wang2019collecting, ye2019privkv} and widely used in many applications ~\cite{erlingsson2014rappor,team2017learning,wang2019answering,qin2016heavy}. Recent studies show that LDP is vulnerable to data poisoning attacks, i.e., its results can be manipulated by a small portion of malicious local users~\cite{cao2021data,cheu2021manipulation, wu2022poisoning, li2023fine,tong2024data, li2024robustness}.  This emerging threat urges people to rethink the security implications of LDP and highlights the pressing need for effective defense for vulnerable LDP protocols. 
\blfootnote{This paper was accepted by ACM CCS 2025 (https://doi.org/10.1145/3719027.3744839).}

\vspace{-5pt}
The existing research centered on attack exploration and understanding of adversarial behavior, while little attention has been paid to a systematic study of countermeasures. In this work, we focus on mitigating data poisoning attacks on state-of-the-art categorical frequency oracles (CFOs) for frequency estimation \cite{wang2017locally, cheu2021manipulation, kairouz2014extremal}. In the literature, the detection methods were briefly discussed supplementary to the main effort of attack discovery \cite{cao2021data} and suffer from many limitations. For instance, in the state-of-the-art maximal gain attack (MGA) \cite{cao2021data}, the attackers set the target item indexes to 1 in their reports to boost the corresponding frequencies. Meanwhile, they also set a fixed number of non-target indexes to masquerade as benign users. Despite being intuitive, such a fixed number may leak attack patterns. Therefore, a practical detection should consider stealthier attacks that can adaptively set non-target indexes (see Section~\ref{Attack_intro}). In addition, the existing detection methods often require extra knowledge and incur non-negligible false positives and time costs.  
On the other hand, post-processing as a utility boosting method for LDP was also adapted to tackle post-attack data recovery \cite{sun2024ldprecover, huang2024ldpguard, cao2021data}. However, there is a lack of investigations on their comparative performance and understanding of the relationship between the for-attack methods and those for no-attack settings, which is crucial for attack-aware post-processing design. In this paper, we want to answer the following research questions to provide guidance in developing a reliable and informed mitigation framework that is adaptive to various challenging attack scenarios. 


\vspace{4pt}
\textbf{RQ-1.} \textit{Can the attacker go beyond the state-of-the-art MGA~\cite{cao2021data} and launch a more adaptive attack as mentioned above for elevated stealthiness?}  MGA is an intuitive adversarial strategy. A discussion on more advanced attack behaviors is paramount for us to understand the attacker's capabilities and build a meaningful foundation for better defense. 

\vspace{4pt}
\textbf{RQ-2.} \textit{Can we accurately and efficiently identify malicious users?} The current fake user detection targets MGA only with high false positives \cite{cao2021data}. It also incurs significant computational overhead, which is not friendly to time-sensitive applications. 

\vspace{4pt}
\textbf{RQ-3.} \textit{Can we still effectively detect attack behaviors without additional data and attack knowledge when identifying bogus reports is challenging?} Fake user detection is not always possible against adaptive, stealthy attacks. It is important to capture abnormal behavior with minimal cost for rapid and informed decision-making. 

\vspace{4pt}
\textbf{RQ-4.} \textit{Given positive results from the detections mentioned in \textbf{RQ-2} and \textbf{RQ-3}, can we recover the corrupted data utility? How good are the current post-processing methods for attack suppression and utility boosting}? Prior work studied the attack \cite{cao2021data,sun2024ldprecover} and non-attack \cite{wang2019locally} settings independently. Discussing them with comparative analyses helps us understand their respective enabling components and interpret their performance for practical recommendations and new designs in the future.

\vspace{4pt}
In this paper, we study defenses against advanced data poisoning attacks on state-of-the-art CFOs protocols, i.e., GRR~\cite{kairouz2014extremal}, OUE~\cite{wang2017locally}, OLH~\cite{wang2017locally} and HST \cite{cheu2021manipulation}. We investigate extending existing attacks to new LDP settings and study the potentials to achieve more hidden attack behaviors. We propose a new metric to effectively measure the attack efficacy across various LDP protocols and data settings. 

Based on a realistic set of attack variants, we propose a mitigation framework consisting of novel attack detections and attack-resilient post-processing methods for data recovery. In particular, we present a new \textit{differential statistical anomaly detection} to find fake LDP participants controlled by the attacker.  The method leverages the abnormal patterns of bogus reports to identify malicious users. 
Our experiment shows a substantial performance improvement compared to the prior work \cite{cao2021data} (e.g., about 0.8 versus 0.3 for $F1$ scores in most cases for MGA) while enabling fake user identification in an adaptive MGA setting for the first time. 
To respond to the challenges of detecting bogus reports with our new attack strategy, we propose an \textit{abnormal statistics detection} by using the inherent LDP characteristics to statistically differentiate the attack and non-attack scenarios. The experiment reports a detection accuracy of $100\%$ in almost all tested settings. We consider our detection methods \textit{zero-shot} since they only depend on known data knowledge and are agnostic to attack details. In addition, we significantly reduce the time costs for running detections and make them suitable for applications that are sensitive to latency in practice.

In this work, we further study utility recovery under the data poisoning attack. We propose a novel LDP post-processing method, \textit{robust segment normalization}, and empirically study the performance of state-of-the-art for attack and non-attack purposes at the same time. We find that the widely adopted post-processing method -- \textit{consistency} (i.e., non-negative frequency estimates and sum to 1)~\cite{wang2019locally}, plays a more important role in effectively recovering corrupted data than other existing strategies. Based on the experimental results, we make recommendations to help service providers select appropriate LDP post-processing methods under varying attack influences. The relevant code is provided at \textbf{\url{https://github.com/Marvin-huoshan/MDPA_LDP/}}. 
We summarize our contributions below. 
\begin{itemize}[leftmargin=*]
\item We comprehensively investigate the mitigation against the data poisoning attack on LDP frequency estimation, including new attack exploration, novel detection methods, and attack-resilient utility recovery, which advances the much-needed defensive technology development and generates new knowledge for LDP security. 
\vspace{2pt}
\item Our research reveals a new stealthy attack strategy on existing CFO protocols, deepening our understanding of attacker capabilities. We also present a metric to help measure the attack efficacy over diverse protocols and data settings. 
\vspace{2pt}
\item We propose novel zero-shot detection methods for underlying attacks to identify malicious users and hidden attack patterns. The experimental results show significant improvement in detection accuracy and time cost compared to the state-of-the-art. 
\vspace{2pt}
\item We design a new LDP post-processing method and empirically study the impact of post-processing in the presence of attackers with varying capabilities for the first time. The study produces new insight into the recoverability of corrupted LDP data. The experimental results also help with recommendations for practical deployment. 
\end{itemize}

\section{Background and Related Work} 
\subsection{Local Differential Privacy} 
LDP enables $n$ users to share their data $v$ with an untrusted server through a local perturbation function $\Psi(\cdot)$ such that only obfuscated items $\Psi(v)$ is obtained by the server. Formally, 
\begin{definition} ($\epsilon\mbox{-Local Differential Privacy}$~\cite{duchi2013local}). An algorithm $\Psi(\cdot):\mathcal{D}\rightarrow \hat{\mathcal{D}}$ satisfies $\epsilon$-LDP if for any $v_1, v_2 \in \mathcal{D}$ and for $y \in \hat{\mathcal{D}}$, $\Pr[\Psi(v_1) = y] \leq e^\epsilon \Pr[\Psi(v_2) = y]$.\end{definition}

\subsubsection{Categorical Frequency Oracles}\label{CFOs}

We briefly introduce some state-of-the-art CFOs used in this paper for frequency estimation in LDP. We assume that there are $d$ items in $\mathcal{D}$. $v\in[d]$ is the index of the item in the encoding space,  where $[d]$ is $\{1,...,d\}$ for simplicity.

\vspace{2pt}
\textbf{Generalized Randomized Response (GRR)}~\cite{kairouz2014extremal}.
In GRR, a user directly encodes their item $v$. The perturbation function keeps $v$ with probability $p = \frac{e^\epsilon}{e^\epsilon + d - 1}$ and changes it to another item with probability $q = \frac{1}{e^\epsilon + d - 1}$. By collecting perturbed user report $y^{(j)}$ from all $n$ users, the frequency of each item can be estimated by the aggregation function $\Phi_{\mathrm{GRR}}(v) = \frac{C_v - nq}{n(p-q)}$, where $C_v$ is the count of $v$ instances in report $y$.

\vspace{2pt}
\textbf{Optimal Unary Encoding (OUE)} ~\cite{wang2017locally}.
OUE achieves the theoretical lower bound of $L_2$ errors. In OUE, each item $v$ is encoded into a one-hot vector $\mathbf{v} = [0, \ldots, 1, \ldots, 0]$ with 1 in the $v$-th position only. $\mathbf{v}$ is further randomized to the report $\mathbf{y}$ by flipping 1 to 0 with probability $p=\frac{1}{2}$ and 0 to 1 with probability $q=\frac{1}{e^\epsilon + 1}$. We obtain the frequency of item $v$  by $\Phi_{\mathrm{OUE}}(v) = \frac{\sum_{j=1}^{n} \mathbf{y}^{(j)}[v] - \frac{n}{e^\epsilon + 1}}{n(\frac{1}{2}-\frac{1}{e^\epsilon + 1})}$.

\vspace{2pt}
\textbf{Optimal Local Hashing (OLH)}~\cite{wang2017locally}. OLH can also achieve the same minimum $L_2$ error as OUE. It is preferred for large domain sizes by encoding input items to a smaller domain of size $g = \lfloor e^{\epsilon}+1\rfloor \ll d$. In the \textit{user setting} of OLH, each user randomly selects a hash function $h$ from a universal hash family $\mathbf{H}$ to encode $v \in [d]$ into $v_h \in [g]$. The perturbation function keeps $\hat{v}_h = v_h$ with probability $p = \frac{e^\epsilon}{e^\epsilon + g - 1}$ and changes it to another hash value with probability $q = \frac{1}{e^\epsilon + g - 1}$. Given the reports $y^{(j)}=\langle h^{(j)}, \hat{v}^{(j)}_h \rangle$ from all $n$ users, the frequency of each item can be estimated by  $\Phi_{\text{OLH}}(v) = \frac{C_v - \frac{n}{g}}{n(\frac{e^\epsilon}{e^\epsilon+g-1} - \frac{1}{g})}$, where $C_v = |\{j \mid h^{(j)}(v) = \hat{v}^{(j)}_h\}|$ counts the number of reports that support the item $v$.

\vspace{2pt}
\textbf{ExplicitHist (HST)}~\cite{cheu2021manipulation}. In HST, a uniform public vector $\mathbf{s}$ of length $d$ is generated. HST becomes OLH with $g=2$  \cite{wang2017locally}. The user $j$ randomizes the $v$-th element $\mathbf{s}^{(j)}[v]$ to $\frac{e^\epsilon + 1}{e^\epsilon - 1} \times \mathbf{s}^{(j)}[v]$ with probability $\frac{e^\epsilon}{e^\epsilon + 1}$ and to $-\frac{e^\epsilon + 1}{e^\epsilon - 1} \times \mathbf{s}^{(j)}[v]$ with probability $\frac{1}{e^\epsilon + 1}$. The frequency of each item thus can be estimated by $\Phi_{\text{HST}}(v) = \frac{1}{n} \sum_{j=1}^{n} y^{(j)} \times \mathbf{s}^{(j)}[v]$.

\subsubsection{LDP Post Processing}\label{existing_post_methods}
Consistency condition (i.e., non-negative frequency estimates and sum to 1) is widely used along with CFOs as a post-processing strategy to improve the utility~\cite{wang2019locally}. We briefly describe Norm-Sub and Base-Cut as they are based on consistency and do not assume prior data knowledge (versus Power and PowerNS~\cite{wang2019locally}). They were also recommended in prior work for non-attack situations~\cite{wang2019locally}. 
We denote the estimated frequency as $\tilde{f}$ and the one after post-processing as $\tilde{f}^{\prime}$.

\vspace{2pt}
\textbf{Norm-Sub.} Norm-Sub adjusts a frequency estimate to $\tilde{f}_v^{'}=\max(\tilde{f}_v + \Delta, 0)$ by converting negative estimates to zero and adding $\Delta$ to the remaining estimates. As a result, the sum of all frequency estimates equals 1, i.e., $\sum_{v \in \mathcal{D}} \max(\tilde{f}_v + \Delta, 0) = 1$. 

\vspace{2pt}
\textbf{Base-Cut}.
Base-Cut does not consider \textit{sum to 1}. Instead, it simply keeps the estimates that are above a sensitivity threshold and sets the rest to zero. 

Post-processing methods are also adopted for utility recovery under data poisoning attacks. 

\vspace{2pt}
\textbf{Normalization}. The server re-calibrates the frequency of each item $v$ by $\tilde{f}_v^{\prime} = \frac{\tilde{f}_v - \tilde{f}_{min}}{\sum_v(\tilde{f}_v - \tilde{f}_{min})}$, where $\tilde{f}_{min}$ is the smallest estimate~\cite{cao2021data}.

\vspace{1pt}
\textbf{LDPRecover}. In LDPRecover~\cite{sun2024ldprecover}, the observed frequency $\tilde{f}_Z(v)$ of an item is assumed to be a combination of the genuine component $\tilde{f}_X(v)$ and bogus component $\tilde{f}_Y(v)$, weighted by the proportions of genuine and fake users.  LDPRecover divides items into two sets based on their observed frequencies and focuses on restoring genuine frequencies $\tilde{f}_X^{'}$ by solving a constraint inference problem. LDPRecover also ensures the consistency condition.

In addition, LDPGuard~\cite{huang2024ldpguard} used two rounds of reports to estimate attack details, including the percentage of fake users and attack strategies, to inform the recovery, which may not be practical. We do not consider it in this paper.  

 We propose a new post-processing method in this work and empirically study it with the above state-of-the-art approaches in both attack and non-attack environments. We expect this will generate new knowledge about the recoverability of corrupted utility and shed light on attack-resilient designs in the future.

\subsection{Attack Detection}
Frequent itemset anomaly detection (FIAD) was proposed in ~\cite{cao2021data} to detect fake users for MGA. The intuition is that the reports of fake users always support a set of target items regardless of LDP perturbation. The method adopted frequent itemset mining to find malicious users by checking their supported items.  However, FIAD fails when $r$ (i.e., the number of target items) is small and becomes ineffective against the adaptive MGA attack. FIAD also suffers from a high false positive rate (refer to ~\cite{cao2021data} for more details). A conditional probability-based attack detection was also presented in \cite{cao2021data} to detect polluted item frequencies instead of bogus users. It depends on ground-truth data knowledge, such as fake user percentage and attacked items, which may not be available in practice.

In this paper, we propose a novel method for fake user detection, which requires no prior data and attack knowledge. It significantly outperforms the state-of-the-art and produces fewer false positives. In addition, we present a new detection approach for scenarios where identifying bogus reports is challenging.
\section{Mitigation Overview}
In this section, we overview the proposed mitigation. We introduce the threat model, target attacks, and metrics for evaluation.

\subsection{Threat Model}
\vspace{2pt}
\mybpara{Attacker's Capability and Goal.} Consistent with prior work~\cite{cao2021data,li2023fine}, we assume that the attacker can control $m = \beta \cdot n$ fake users, where $n$ is the total number of users and $\beta \in [0, 1]$ is the percentage of fake users. Since the LDP perturbation function is on the user end, the attacker can circumvent it and directly craft bogus values in its output domain $\mathcal{\hat{D}}$. As a result, the fake reports will be injected and aggregated with benign ones on the server side. The attacker also knows the related information, such as privacy budget $\epsilon$, the item domain $\mathcal{D}$ and its size $d$, and the support set $S(y)$, which is a set of items that report $y$ supports \cite{wang2017locally}.

The attacker's goal is to increase the estimated frequencies of a set of  $r$ target items $T = \{t_1, t_2, ... , t_r\}$. To this end, the attacker carefully crafts the perturbed values $\hat{Y}$ to maximize the overall gain of target items: 
\begin{equation} \label{Attacker_goal}
   \max_{\hat{Y}}\sum_{t\in T}\mathbb{E}\left[\Delta\Tilde{f}_t\right]
\end{equation}
where $\Delta\Tilde{f}_t = \Tilde{f}_{t,\textit{after}} - \Tilde{f}_{t,\textit{before}}$ is the frequency gain of target item $t$ after the attack. We also consider a baseline attack with the same goal, but the attacker can only provide false values in the input domain of the perturbation. Thus its behavior is indistinguishable from honest users.

\vspace{2pt}
\mybpara{Knowledge of Defender}. We assume that the defender knows nothing about attack details (e.g., $\beta$ and $T$) and underlying data, other than the LDP parameters (e.g., $\epsilon$ and $n$) and received reports. 

\subsection{Attacks}\label{Attack_intro} 

\vspace{2pt}
\textbf{Maximal Gain Attack.} 
MGA is the state-of-the-art data poisoning attack on CFOs~\cite{cao2021data}. The main idea is to craft the perturbed values for the fake users via solving the optimization problem in Eq.~\eqref{Attacker_goal}. The original MGA only supports OUE, OLH-User (i.e., the user setting) and GRR. We extend it to OLH-Server (i.e., the server setting) and HST. We briefly describe it here and refer readers to \cite{cao2021data} for details of the original MGA. 
\begin{itemize}[leftmargin=*]
    \item \textbf{OUE.} For each fake user, the attacker initializes a zero vector $y^{(j)}$ of length $d$ and sets $y^{(j)}_t = 1$ for all $t\in T$. To further hide traces, the attacker randomly sets $l=\lfloor p + (d-1)q -r\rfloor$ non-target bits to ensure the number of 1's matches the expected number in the report of a genuine user. 
   
    \item \textbf{OLH-User.} In OLH-User, users randomly select the hash function. Therefore, each fake user will choose a hash function $h^{(j)}$ that maps all items in $T$ to $v^{(j)}_h$, i.e., $\sum_{t\in T} \mathds{1}_{S(y^{(j)})}(t)=r$ and submit the report $y^{(j)}=\langle h^{(j)}, v^{(j)}_h \rangle$ to the server.  $\mathds{1}_{S(y^{(j)})}(v)$ is a characteristic function and outputs 1 if $y^{(j)}$ supports item $t$. The number of items supported by the hash value is made close to $\sum_{v\in \mathcal{D}} \mathds{1}_{S(y^{(j)})}(v) = \frac{d}{g}$ to further hide the attack.

    \item \textbf{OLH-Server.} In OLH-Server, the server chooses the hash function $h^{(j)}$ such that the attacker aims to find a hash value $v_{h} = \arg\max\left(\sum_{t\in T} \mathds{1}_{S(y^{(j)})}(t)\right)$ that maps the most items $t \in T$ with $h^{(j)}$.
   
    \item \textbf{HST-User.} In this setting, a user samples her public vector $\mathbf{s}^{(j)}$. For each fake user, the attacker initializes a public vector $\mathbf{s}^{(j)} = [-1, -1, \ldots, -1]$ of length $d$, and sets $\mathbf{s}^{(j)}[t] = 1$ for all $t \in T$. Since $\mathbf{s}^{(j)}$ is expected to follow uniform distribution (i.e., equal numbers of $-1$ and $1$), the attacker randomly sets $l = \lfloor d/2 - r \rfloor$ non-target positions to $1$ in $\mathbf{s}^{(j)}$. $y^{(j)}$ is then set to $\frac{e^{\epsilon}+1}{e^{\epsilon}-1}$ to maximize the frequencies of the target set.
    
    \item \textbf{HST-Server.} In HST-Server, the server sets a public binary vector for each user uniformly. Fake users can only promote the expected frequencies of the aggregated target set, i.e., $\sum_{t\in T}\Big[\frac{1}{n}\sum_{j=1}^n$ $y^{(j)} \times \mathbf{s}^{(j)}[t]\Big]$, by manipulating $y^{(j)}$. 

    \item \textbf{GRR.} A user report is a perturbed item in GRR. We have $\sum_{t\in T}$ $ \mathds{1}_{S(y^{(j)})}(t)\leq 1$ and $\sum_{t\in T} \mathds{1}_{S(y^{(j)})}(t) = 1$ when $y^{(j)}$ is a target item in $T$.  Therefore, MGA selects any $t\in T$ for each fake user. 
\end{itemize}

\vspace{2pt}
\textbf{Adaptive Maximal Gain Attack (MGA-A).}
MGA-A is an enhanced MGA that can evade the FIAD detection~\cite{cao2021data} by crafting a perturbed value that only supports a subset of the target set $T$. Specifically, the attacker randomly selects and supports an itemset of size  $r'$ from $\binom{r}{r'}$ possible subsets for each fake user, where $r' < r$. MGA-A was shown to significantly reduce the effectiveness of FIAD, especially when $r'\le 2$, while maintaining high attack efficacy (refer to \cite{cao2021data} for more details). MGA-A was originally designed for OUE. In this paper, we extend it to cover OLH and HST.  MGA-A cannot be applied to GRR which only reports a single value.


\vspace{2pt}
\textbf{Adaptive Pattern Attack.}\label{APDA}
Though MGA-A may evade FIAD, our detection method can still effectively identify fake users (see Section~\ref{sec:Result_Diffstats}). In this paper, we discover a new attack strategy, \textit{adaptive pattern attack} (APA), where the attacker can further hide attack patterns by strategically setting bits in the crafted report. In MGA-A, the support of the values in a crafted report is matched with the expected support of a genuine user, for instance, setting $\sum_{v\in \mathcal{D}} \mathds{1}_{S(y^{(j)})}(v) = \lfloor p + (d - 1)q \rfloor$ in OUE;  the proposed APA further tweaks the support at non-target positions to maximize the estimated frequencies of target items while reducing detectability. Specifically, APA can be applied to the following CFOs. \begin{itemize}[leftmargin=*]
    \item \textbf{OUE.} For the number $k \in [0, d]$ of 1's in the report produced by the LDP, APA generates $\omega[k]$, which represents the number of fake users whose perturbed reports support $k$ items, such that $\sum_{k}\omega[k] = m$. $m$ is the total number of fake users. For different $k$, $\omega[k]$ fake users are selected and $l = \max(\lfloor k-r \rfloor,0)$ non-target items are randomly set to 1 in their reports.
    
    \item \textbf{OLH-User.} For any user report $y^{(j)}$, let $k \in [0, d]$ be the number of supported items $v \in \mathcal{D}$. For all $k$, $\omega[k]$ is generated such that $\sum_{k} \omega[k] = m$. For a specific fake user, assuming its assigned target support count is $k$, APA  ensures that the items supported by the hash value $v^{(j)}_h$ are close to $k$, i.e., $\sum_{v\in \mathcal{D}} \mathds{1}_{S(y^{(j)})}(v) = k$.
    
    \item \textbf{HST-User.} For $k \in [0, d]$ bits set to 1 in a report,  $\omega[k]$  is generated and  $\sum_{k}\omega[k] = m$. For different $k$,  $\omega[k]$ fake users are selected and  $l = \max(\lfloor k - r \rfloor,0)$ non-target positions are randomly set to $1$ in the report. The report $y_j$ is then set to $\frac{e^{\epsilon}+1}{e^{\epsilon}-1}$ to maximize the frequencies of the target set.
\end{itemize}
Note that in the server setting of OLH and HST, the attacker is significantly constrained in selecting hash functions or setting bits in the public vector since these are assigned by the server; in GRR, the report is a single value. Therefore, we do not consider APA in these protocols.

\vspace{2pt}
\textbf{Baseline Attack.} The baseline attack is a universal strategy that follows an LDP protocol but supplies bogus data in the input domain of LDP \cite{cao2021data,li2023fine}. Therefore, the behavior of a baseline attacker is indistinguishable from that of an honest user without prior knowledge of the ground truth data. We use the baseline attack as a benchmark and propose a new metric based on it for a more meaningful interpretation and comparison between our recovery methods and existing ones (see Section~\ref{sec:metric}). 

\begin{figure*}[ht] 
\centering
\includegraphics[scale=0.31]{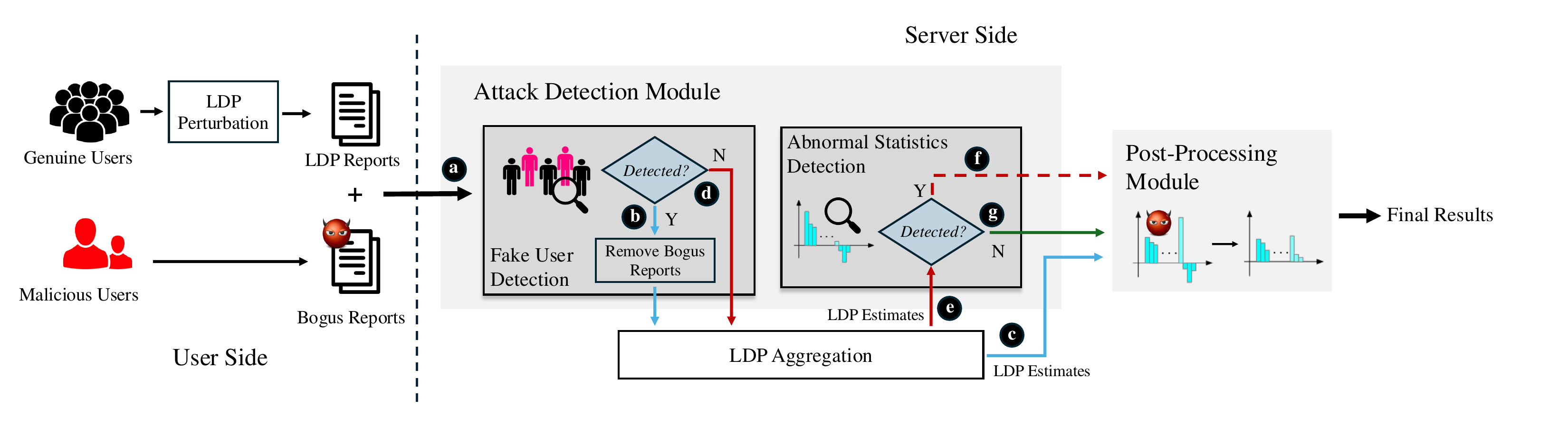}
\caption{Mitigation workflow for data poisoning attacks on LDP}\label{Framework}
\Description{Mitigation workflow for data poisoning attacks on LDP}
\end{figure*}
\subsection{Workflow}
We overview our mitigation for data poisoning attacks on LDP in Figure~\ref{Framework}. The defense occurs on the server side and contains two modules: \textit{attack detection} and \textit{post-processing}. We propose two novel detection methods. The first is to identify fake users. The other is to detect abnormal statistics in the LDP estimates. The detection results will inform mitigation strategies in the following steps. The post-processing will further reshape the LDP estimates to suppress the attack gain and boost data utility. 


\vspace{2pt}
\textbf{Path 1: \blackcirclea{a} $\Rightarrow$ \blackcircleb{b} $\Rightarrow$ \blackcirclea{c}.} Given the collected LDP reports, our \textit{fake user detection} attempts to identify malicious users (\textbf{step} \blackcircleb{a}).  The corresponding user reports will be subsequently removed from the report collection before being sent to the LDP aggregation (\textbf{step} \blackcircleb{b}). At this point, we consider the collected dataset to be clean without attack influence.  The post-processing methods will be used to further boost the data utility (\textbf{step} \blackcircleb{c}). The experiment shows that our detection outperforms the state-of-the-art in identifying bogus users in MGA and MGA-A attacks. 

\vspace{2pt}
\textbf{Path 2: \blackcirclea{a} $\Rightarrow$ \blackcircleb{d} $\Rightarrow$ \blackcirclea{e} $\Rightarrow$ \blackcircleb{f}.} If detecting fake users is challenging (e.g., under APA or weak attacks close to the baseline attack), the collected data will first be sent to the LDP aggregation function (\textbf{step} \blackcircleb{d}). The estimated result will be further examined by looking for abnormal statistics (\textbf{step} \blackcircleb{e}). Should an attack be recognized, depending on the data practices and available resources, the server either terminates the protocol and recollects data, or post-processes the result to reduce the attack gain and recover utility (\textbf{step} \blackcircleb{f}).

\vspace{2pt}
\textbf{Path 3: \blackcirclea{a} $\Rightarrow$ \blackcircleb{d} $\Rightarrow$ \blackcirclea{e} $\Rightarrow$ \blackcirclea{g}.} This path is identical to \textbf{Path 2} except for the last step, where no attack is detected. In this case, the LDP estimate will be post-processed to increase utility (\textbf{step} \blackcircleb{g}).

\subsubsection{Metrics} \label{sec:metric}
For fake user detection, we measure the $F1$ score while we evaluate the abnormal statistics detection using \textit{accuracy}.

We consider both \textit{utility boosting} and \textit{attack suppression} for LDP post-processing in the presence of data poisoning attacks. We use the widely adopted metric, \textit{mean square error} (MSE) \cite{wang2017locally, wang2019locally} for utility. 
 For attack suppression, frequency gain reduction is an intuitive metric to measure the change of the attack gain after post-processing. However, this metric is inherently limited in interpreting a meaningful comparison of the attack recoverability across various post-processing methods. For instance, a positive gain reduction,  without additional context information, is a weak indicator of how much of the attack has been contained. In other words, it is unclear if the attack can still significantly impact the LDP result. In addition, this metric varies given different target itemset size $r$ and malicious user percentage $\beta$. A more precise measure of attack influence at the per-fake-user and per-item level is desired. In this paper, we present a new metric, \textit{item gain ratio} (\IGR), inspired by prior work \cite{li2024robustness} to address the above issues. \IGR measures a normalized frequency gain change per target item caused by a fake user after post-processing versus the attack gain by the baseline attack. This is because the baseline represents the minimum damage the attacker can always cause in any circumstances. Formally, 
 \begin{align}
    \IGR &= \frac{\sum_{t\in T}\mathbb{E}(\tilde{f}_{t,\textit{recovery}}^{\prime}-\tilde{f}_{t,\textit{before}})}{\sum_{t\in T}\mathbb{E}(\tilde{f}_{t,\textit{base}}-\tilde{f}_{t,\textit{before}})\cdot r}
\end{align}
where $\tilde{f}_{t,\textit{before}}$ is the frequency before attack, $\tilde{f}_{t,\textit{recovery}}^{\prime}$ is the frequency after post-processing, and $\tilde{f}_{t,\textit{base}}$ is the frequency under a baseline attack.

Intuitively, when $\IGR \gg 1/r$, the attack is still considered to be strong compared to the baseline at the per-item level. When it decreases and approaches $1/r$, the efficacy of the attack gets close to that of the baseline attack, which is generally considered no longer a significant threat. When $\IGR\in [0, \frac{1}{r}]$, the attack becomes even weaker than the baseline. A negative \IGR indicates excessively suppressed frequencies of target items, which may lead to deteriorated data utility. 

\section{Attack Detection}
In this section, we introduce the proposed attack detection methods.

\subsection{Fake User Detection}
In prior work~\cite{cao2021data}, FIAD uses frequent itemset mining to find malicious itemsets by checking if the number of supporting users is greater than a predefined threshold. As a result, all the supporting users for these itemsets will be considered malicious. However, the detection accuracy is largely contingent on how precise we can derive the threshold, which is often challenging in practice. Moreover, the mining process is time-consuming, leading to delayed detection. In this paper, we propose a novel \textit{differential statistical anomaly detection} (\Diffstats) to overcome the above limitations. \Diffstats (in Algorithm~\ref{algorithm_Diffstats})  identifies malicious users by adopting a different strategy that looks at the statistical differences, i.e., $E_{sq}(k)$ and $E_{freq}$, between fake and genuine reports. Our detection includes two steps. First, \Diffstats computes the discrepancy $E_{sq}(k)$ between the observed frequency $O^{k}$ (line 1) of bits set to 1 in the reports of $n$ users and the expected frequency $Y^{k}$ (line 2). It then leverages $E_{sq}(k)$ to determine a candidate fake user set $\mathcal{U}_{s}$ from all users $\mathcal{U}$ (line 6-10). Second, to reduce false positives, \Diffstats calculates $E_{freq}$ for all $\{\mathcal{U}/\mathcal{U}_{sc}\}$ (line 15), where $\mathcal{U}_{sc}$ are subsets of $\mathcal{U}_{s}$ (line 14). Consequently, the smaller $E_{freq}$ is, the more likely that the corresponding $\{\mathcal{U}/\mathcal{U}_{sc}\}$ contains more genuine users and $\mathcal{U}_{sc}$ is malicious (line 16-17). The two steps are performed iteratively for all $k$ until producing the final result of fake users $\mathcal{U}_{f}$. The detailed description of identifying $\mathcal{U}_{sc}$ for reduced false positives is provided in Section~\ref{put_all_together}. In addition, our method is much faster in fake user identification compared to FIAD (see Section~\ref{sec:Result_Diffstats}).

\subsubsection{Frequency Approximation}
\label{sec_theoretical_dist} 
In this section, we first describe deriving the frequency $X$ of the number of ``1'' bits for one user and then extend it to $Y^{k}$ for $n$ users. We take OUE as an example and later discuss OLH, HST, and GRR. 

\begin{algorithm}[t!]
\caption{\textbf{\Diffstats for Fake User Detection}}\label{algorithm_Diffstats}
\begin{flushleft}
        \textbf{Input:} Users' reports $\mathcal{Y}$.\\
        \textbf{Output:} Fake users $\mathcal{U}_f$.
\end{flushleft}
	\begin{algorithmic}[1]
		\STATE Set $O^k = |\{y^{(j)} \in \mathcal{Y} \mid \sum_{v\in \mathcal{D}} \mathds{1}_{S(y^{(j)})}(v) = k \}|$
        \STATE Set $Y^ k = n \cdot P(X = k)$.
            \STATE Initialize the minimum $E_{freq}$ as $E_{min} = +\infty, \mathcal{U}_{f} = \emptyset$
            \STATE Set $\mathcal{K} = \{0,1,2,...,d\}$
            \WHILE{$\mathcal{K} \neq \emptyset$}
            \STATE Let $\delta=\arg\min_{k\in \mathcal{K}}E_{sq}(k)$
            \STATE Update $\mathcal{K} = \{\mathcal{K}\backslash \delta\}$
            \STATE Initialize candidate fake users set $\mathcal{U}_s = \emptyset$
            \FOR{$k \in \mathcal{K}$} 
            \STATE Add users to $\mathcal{U}_s$ whose reports contain $k$-bit ``1''.
            \ENDFOR
            \STATE For the top-$L$ supported items $\mathcal{S}_L$ in the common support set $S$, compute all combinations $\mathcal{P}(\mathcal{S}_L)$ in $\mathcal{U}_s$.
            \FOR{$s$ in $\mathcal{P}(\mathcal{S}_L)$}
            \STATE Obtain subset $\mathcal{U}_{sc}\subseteq\mathcal{U}_{s}$ that supports $s$.
            \STATE Compute $E_{freq}$ for all users in $\{\mathcal{U}\backslash \mathcal{U}_{sc}\}$.
            \IF{$E_{freq}<E_{min}$}
            \STATE Update $E_{min} = E_{freq}$ and  $\mathcal{U}_f = \mathcal{U}_{sc}$
            \ENDIF
            \ENDFOR
            \ENDWHILE
		\RETURN $\mathcal{U}_f$
	\end{algorithmic}
\end{algorithm}

For the OUE perturbation, we consider the event of keeping the bit of 1 (i.e., the user's item) a Bernoulli trial, i.e., $X_a \sim \text{Bernoulli}(p)$ and flip the remaining $d-1$ bits independently to 1 with probability $q$. As a result, we may use a binomial distribution for this process, i.e., $X_b \sim B(d-1, q)$. The distribution $X$ of the number of ``1'' bits in a user’s report can be estimated as the sum of two random variables, $X_a$ and $X_b$, following a Poisson binomial distribution $X \sim PB(p_1, p_2, \ldots, p_d)$, where $X = \sum_{i=1}^d X_i$ and each $X_i$ is an independent Bernoulli random variable with expected value $E[X_i] = p_i$. $p_i = p$ if $i = v$, and $p_i = q$ otherwise. We subsequently approximate $X \sim PB(p_1, p_2, \ldots, p_d)$ using  a binomial distribution $X \sim B(d, \tilde{p})$, 
where $\tilde{p} = \frac{p + (d-1)q}{d}$ is the mean of $p_i$. The approximated error bound can be derived as $dist(PB,B) \leq (1 - \tilde{p}^{d+1} - \tilde{q}^{d+1})\frac{\sum_{i=1}^d (p_i - \tilde{p})^2}{(d+1) \cdot \tilde{p} \cdot \tilde{q}}$ according to \cite{EHM19917}, where $\tilde{q} = 1 - \tilde{p}$. $dist(PB, B)$ gets close to 0 if and only if the ratio $R_{var}$ of the variance of $PB$ to that of $B$ approaches 1. To see this, for $B(d, \tilde{p})$ and its variance $Var(B) = d \tilde{p}(1 - \tilde{p})$, we have $ R_{var} = \frac{Var(PB)}{Var(B)} = 1 - \frac{(d-1)(p-q)^2}{d^2\tilde{p}(1-\tilde{p})}$. 
Since $d$ is typically large in underlying applications (e.g., $d > 100$ items), $R_{var}$ approaches 1, resulting in $dist(PB, B)$ close to 0. Therefore, we can readily approximate $X\sim B(d, \frac{p + (d-1)q}{d})$ for a single user. Since the perturbations for $n$ users are i.i.d. processes,  the expected frequency $Y^k$ of bits set to ``1'' in the reports of $n$ genuine users is
\begin{equation}\label{theoretical_dist}
    Y^k \approx n \cdot P(X = k),\nonumber
\end{equation}
where $P(X = k) = \binom{d}{k} \tilde{p}^k (1-\tilde{p})^{d-k}$ is the probability mass function of the binomial distribution, given that exactly $k$ bits in a single user's vector are set to 1.

\vspace{2pt}
\textbf{Other Target CFOs Protocols.} 
The expected frequency $Y^k$ serves as a baseline for calculating errors $E_{freq}$ and $E_{sq}$. Under different CFOs, the distribution $X$ varies given distinct encoding and perturbation functions, which thus results in different $Y^k$. In OLH, a malicous user crafts a fake report $y^{(j)}=\langle h^{(j)}, v^{(j)}_h \rangle$ with $\sum_{t \in T} \mathds{1}_{S(y^{(j)})}(t) = r$. OLH requires that the choice of the hash function from $\textbf{H}$ is uniform. This ensures the expected size of the support set of the perturbed value $y^{(j)}$ in the input domain $\mathcal{D}$ is $\frac{d}{g}$. Therefore, $X \sim B(d, \frac{1}{g})$. Likewise, in HST, we consider the number of 1's in the public vector. Assuming an ideal and uniform public vector, the distribution is modeled as $X \sim B(d, \frac{1}{2})$. GRR is more vulnerable to MGA attacks compared to OUE and OLH when $d > (2r - 1)(e^\epsilon - 1) + 3r$~\cite{cao2021data}. Since each user only reports a single value, our fake user detection and prior work do not apply. However, the attack can still be caught by our abnormal statistics detection (see Section~\ref{asd}).

\subsubsection{Quantifying Frequency Discrepancies}\label{sec_dist_discrepancy}

In this subsection, we elaborate on quantifying the errors $E_{sq}$ and $E_{freq}$ to measure the gap between observed frequency $O^{k}$ and expected frequency $Y^{k}$, which helps us identify malicious users.  $E_{sq}$ indicates the error between  $O^k$ and $Y^k$ for a particular $k$ for all $n$ users. Therefore, we have
\begin{equation}\label{fit_error_k} E_{sq}(k) = (O^k - Y^k)^2 \nonumber\end{equation} where $O^k = |\{y^{(j)} \in Y \mid \sum_{v\in \mathcal{D}} \mathds{1}_{S(y^{(j)})}(v) = k \}|$. We define the error $E_{freq}$ as the chi-square statistic in the chi-square goodness-of-fit test \cite{pearson1896vii}. Formally, 
\begin{equation}\label{eq:fit_error}
    E_{freq}(O^k, Y^k) = \sum_k \frac{(O^k - Y^k)^2}{Y^k},
\end{equation}
where the degree of freedom is $k - 1$. Thus, $E_{freq}$ is the sum of $E_{sq}(k)$ for all $k$. Note that our analyses below are generic and apply to OUE, OLH, and HST.

\vspace{2pt}
\textbf{Error Analysis for MGA and MGA-A.} MGA and MGA-A adopt the same attack strategy by setting additional bits up to the expected number of ``1'' in a report of a genuine user. Therefore, the following results apply to both. 

\begin{theorem}\label{errorMGA}
    For $m$ fake users and $k\in[0,d]$, the expected error $\mathbb{E}\left[E_{sq}(k)\right]$ of the MGA and MGA-A is
    \begin{equation}\label{MGA_A_error}
     \mathbb{E}\left[E_{sq}^{MGA}(k)\right]=\left\{\begin{array}{ll}
          m^2\cdot (P(X=k)-1)^2 + Var(O^k_{MGA}), & \mbox{if} \ k = l_g, \nonumber\\
          m^2\cdot (P(X=k))^2 +  Var(O^k_{MGA}), & \mbox{otherwise}, 
     \end{array}
     \right.
\end{equation}
where $l_g = \lfloor p + (d - 1)q \rfloor$ is the expected number of ``1'' in a genuine user's report and $Var(O^k_{MGA}) = (n-m)P(X=k)(1-P(X=k))$ denotes the variance of the observed frequency $O^k_{MGA}$ under the MGA attack.
\end{theorem}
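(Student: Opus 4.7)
The plan is to apply the identity $\mathbb{E}[Z^2] = Var(Z) + (\mathbb{E}[Z])^2$ to $Z = O^k - Y^k$, after decomposing $O^k$ into independent contributions from genuine and fake users.

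First, I would write $O^k = O^k_g + O^k_f$, where $O^k_g$ counts genuine users whose perturbed reports contain exactly $k$ ones, and $O^k_f$ does the same for fake users. The $n-m$ genuine users perturb their items independently, and by the binomial approximation from Section~\ref{sec_theoretical_dist} each one produces such a report with probability $P(X=k)$. Hence $O^k_g \sim B(n-m,\, P(X=k))$, giving $\mathbb{E}[O^k_g] = (n-m)P(X=k)$ and $Var(O^k_g) = (n-m)P(X=k)(1 - P(X=k))$, which already matches the claimed $Var(O^k_{MGA})$.

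Next, I would argue $O^k_f$ is a deterministic constant. Both MGA and MGA-A construct each fake report by setting the target bits ($r$ of them in MGA, a subset of size $r' < r$ in MGA-A) to $1$ and then padding non-target positions with $1$'s until the total count of $1$'s equals $l_g = \lfloor p + (d-1)q \rfloor$, the expected count in a genuine report. Consequently every fake report contributes exactly one unit to bucket $l_g$ and nothing elsewhere, so $O^k_f = m \cdot \mathds{1}\{k = l_g\}$, which has zero variance and is uncorrelated with $O^k_g$. Combining: $Var(Z) = Var(O^k_g)$ and $\mathbb{E}[Z] = (n-m)P(X=k) + m\,\mathds{1}\{k = l_g\} - nP(X=k) = m(\mathds{1}\{k=l_g\} - P(X=k))$. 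Squaring the mean yields $m^2(1 - P(X=k))^2 = m^2(P(X=k) - 1)^2$ when $k = l_g$ and $m^2 P(X=k)^2$ otherwise; adding $Var(Z)$ closes both cases.

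The main obstacle is conceptual rather than algebraic: I must verify that MGA-A, despite reducing the target support from $r$ to $r'$, still pads each fake report to the same total count $l_g$, so that $O^k_f$ remains a point mass at $k = l_g$ rather than becoming spread. Once that invariant is confirmed from the attack constructions in Section~\ref{Attack_intro}, the rest is a textbook variance-plus-squared-bias calculation. A minor subtlety worth flagging is that $P(X=k) = \binom{d}{k}\tilde{p}^k(1-\tilde{p})^{d-k}$ is the binomial approximation introduced in Section~\ref{sec_theoretical_dist}; the exact Poisson-binomial law would shift $Var(O^k_g)$ by a lower-order term whose magnitude is already controlled by the bound $dist(PB,B)$ cited there, so the stated closed form remains accurate to the same approximation level.
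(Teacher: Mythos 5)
Your proposal is correct and follows essentially the same route as the paper's proof: the same decomposition of $O^k_{MGA}$ into a binomial count from the $n-m$ genuine users plus a deterministic point mass of $m$ at $k=l_g$ from the fake users, followed by the variance-plus-squared-bias identity with $Y^k = nP(X=k)$. The invariant you flag for MGA-A (padding every fake report to exactly $l_g$ ones) is precisely the assumption the paper states just before the theorem, so no gap remains.
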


\begin{proof}
   See Appendix \ref{app:proof_errorMGA}
\end{proof}

Given Theorem~\ref{errorMGA}, we further derive $E_{freq}$ in Theorem~\ref{fiterrorMGA}.
\begin{theorem}\label{fiterrorMGA}
The error $E_{freq}(O^k_{MGA}, Y^k)$ between the observed frequency $O^k_{MGA}$ under MGA or MGA-A attack and the expected frequency $Y^k$ is
    \begin{align}\label{eq:fiterrorMGA}
        E_{freq}(O^k_{MGA}, Y^k) &= \frac{m^2}{n}\left(\frac{1}{P(X=l_g)}-1\right) + \frac{(n-m)\cdot d}{n}
    \end{align}
\end{theorem}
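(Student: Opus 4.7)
The plan is to compute $E_{freq}(O^k_{MGA}, Y^k) = \sum_{k=0}^d (O^k - Y^k)^2 / Y^k$ in expectation by substituting $Y^k = n \cdot P(X=k)$ and invoking Theorem~\ref{errorMGA} term by term. By linearity of expectation and the definition $E_{sq}(k) = (O^k - Y^k)^2$, I would write
\begin{equation*}
\mathbb{E}[E_{freq}] = \sum_{k=0}^d \frac{\mathbb{E}[E_{sq}(k)]}{n \cdot P(X=k)},
\end{equation*}
and then split the sum into the single index $k = l_g$ (the expected support size of a genuine report) and the remaining indices $k \neq l_g$, since Theorem~\ref{errorMGA} gives two different closed forms for $\mathbb{E}[E_{sq}(k)]$ in these two cases.

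For each summand I would further separate the contribution of the $m^2$ bias term from that of the variance term $Var(O^k_{MGA}) = (n-m) P(X=k)(1-P(X=k))$. After dividing by $n \cdot P(X=k)$, the variance term collapses to $\frac{n-m}{n}(1 - P(X=k))$, which has the same form at $k = l_g$ and at $k \neq l_g$. Summing it over all $k \in \{0, 1, \dots, d\}$ yields $\frac{n-m}{n} \sum_{k=0}^d (1 - P(X=k)) = \frac{(n-m) \cdot d}{n}$, using $\sum_k P(X=k) = 1$ and the fact that $k$ ranges over $d+1$ values. This already delivers the second summand of the claimed identity.

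The bias terms need one more algebraic reduction. At $k = l_g$ the contribution is $\frac{m^2 (1 - P(X=l_g))^2}{n \cdot P(X=l_g)}$, while at each $k \neq l_g$ it is $\frac{m^2 P(X=k)}{n}$. Summing the latter over $k \neq l_g$ and using $\sum_{k \neq l_g} P(X=k) = 1 - P(X=l_g)$, then adding back the $k = l_g$ term, I would factor $(1 - P(X=l_g))$ from the combined numerator to obtain
\begin{equation*}
\frac{m^2}{n} \cdot \frac{(1-P(X=l_g))\bigl[(1-P(X=l_g)) + P(X=l_g)\bigr]}{P(X=l_g)} = \frac{m^2}{n}\left(\frac{1}{P(X=l_g)} - 1\right),
\end{equation*}
which matches the first summand of Eq.~\eqref{eq:fiterrorMGA}.

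I do not expect a genuine obstacle: once Theorem~\ref{errorMGA} is available the argument is essentially bookkeeping. The two subtleties worth double-checking are (i) that the left-hand side of Eq.~\eqref{eq:fiterrorMGA} is to be read as $\mathbb{E}[E_{freq}]$ rather than a deterministic quantity, since $O^k$ is random and a pointwise equality cannot hold; and (ii) the counting step $\sum_{k=0}^d (1 - P(X=k)) = (d+1) - 1 = d$, which is precisely why the second summand carries a factor of $d$ and not $d+1$. These are the places where a careless proof would go astray.
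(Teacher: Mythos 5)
Your proposal is correct and follows essentially the same route as the paper's proof: substitute the two closed forms from Theorem~\ref{errorMGA} into $\sum_k (O^k-Y^k)^2/Y^k$, split off the $k=l_g$ term, use $\sum_{k\neq l_g}P(X=k)=1-P(X=l_g)$ and $\sum_{k=0}^d\bigl(1-P(X=k)\bigr)=d$, and factor $(1-P(X=l_g))$ to reach $\frac{1}{P(X=l_g)}-1$. Your parenthetical point (i) is well taken --- the paper's own proof silently replaces the random quantities $(O^k-Y^k)^2$ by their expectations, so the statement is indeed an equality of expectations rather than a pointwise one --- but this is a notational observation, not a difference in method.
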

\begin{proof}
    See Appendix \ref{app:proof_fiterrorMGA}
\end{proof}

Theorem~\ref{errorMGA} and Theorem~\ref{fiterrorMGA} show that both errors have a quadratic relationship with $m$, i.e., the number of fake users. They are sensitive to the change of $m$, which is desirable. This aligns with our intuition that more fake users generally lead to a stronger attack with more skewed statistics, thus benefiting the detection. In addition, the fake users in MGA and MGA-A set $l_g$ ``1" in their reports while in practice the probability $P(X=l_g)$ for genuine users is low. Therefore, $E_{freq}$ becomes evident in the presence of attacks. The detection performance also relies on the domain size $d$. For simplicity, we approximate $X \sim N(\mu = d\tilde{p}, \sigma = \sqrt{d\tilde{p}(1 - \tilde{p})})$. Since $\mathbb{E}\left[P(X = l_g)\right] = \mathbb{E}\left[f(x = \mu)\right] = \frac{1}{\sigma\sqrt{2\pi}} = \frac{1}{\sqrt{2\pi d\tilde{p}(1 - \tilde{p})}}$, a small $d$ gives rise to a large $P(X = l_g)$ and a small $ E_{freq}$, thus rendering the detection less effective.

\vspace{2pt}
\textbf{Error Analysis for APA.} We analyze the expected error $\mathbb{E}\left[E_{sq}^{APA}(k)\right]$ for APA by Theorem~\ref{errorAPA} below. 

\begin{theorem}\label{errorAPA}
    For m fake users and $k\in[0,d]$, the expected error $\mathbb{E}\left[E_{sq}^{APA}(k)\right]$ under APA attack is 
    \begin{align}
    \mathbb{E}\left[E_{sq}^{APA}(k)\right] = (m \cdot P(X=k) - \omega[k])^2 + Var(O^k_{APA}). \nonumber
\end{align}
$\omega[k] \in [0,m]$ is the number of attacker vectors in which exact $k$ bits are set to 1 and $Var(O^k_{APA}) = (n-m)P(X=k)(1-P(X=k))$ denotes the variance of the observed frequency $O^k_{APA}$ under the APA attack.
\end{theorem}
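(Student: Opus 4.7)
The plan is to decompose $O^k_{APA}$ into its genuine and malicious components, compute its mean and variance, and then apply the bias--variance identity $\mathbb{E}[Z^2] = (\mathbb{E}[Z])^2 + \mathrm{Var}(Z)$ to $Z = O^k_{APA} - Y^k$.

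First, I would write $O^k_{APA} = O^k_{gen} + O^k_{fake}$, where $O^k_{gen}$ counts genuine users whose reports have exactly $k$ bits set to 1, and $O^k_{fake}$ counts the analogous quantity among the $m$ attacker reports. By the frequency approximation derived in Section~\ref{sec_theoretical_dist}, each of the $n-m$ genuine reports independently has exactly $k$ ones with probability $P(X=k)$, so $O^k_{gen}$ is (approximately) $\mathrm{Binomial}(n-m, P(X=k))$. By the very definition of APA, $O^k_{fake} = \omega[k]$ is a \emph{deterministic} quantity chosen by the attacker to satisfy $\sum_k \omega[k] = m$.

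Next I would compute the two moments. Linearity of expectation gives $\mathbb{E}[O^k_{APA}] = (n-m)P(X=k) + \omega[k]$, while the deterministic nature of $O^k_{fake}$ together with independence of the genuine reports gives $\mathrm{Var}(O^k_{APA}) = \mathrm{Var}(O^k_{gen}) = (n-m)P(X=k)(1-P(X=k))$, matching the claimed formula. Since $Y^k = n P(X=k)$ is a (deterministic) constant, subtracting it does not affect the variance, and the bias becomes
\begin{equation}
\mathbb{E}[O^k_{APA} - Y^k] = (n-m)P(X=k) + \omega[k] - nP(X=k) = \omega[k] - m\,P(X=k). \nonumber
\end{equation}

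Finally, applying $\mathbb{E}[Z^2] = (\mathbb{E}[Z])^2 + \mathrm{Var}(Z)$ with $Z = O^k_{APA} - Y^k$ yields
\begin{equation}
\mathbb{E}\bigl[E_{sq}^{APA}(k)\bigr] = \bigl(m\,P(X=k) - \omega[k]\bigr)^2 + \mathrm{Var}(O^k_{APA}), \nonumber
\end{equation}
which is the claim. There is no real obstacle, since the fake users' contribution being deterministic removes any cross-covariance term; the only subtle point is being careful to treat $\omega[k]$ as a fixed attacker parameter (not a random variable) and to reuse the binomial frequency model of Section~\ref{sec_theoretical_dist} uniformly across OUE, OLH-User, and HST-User so that the same derivation covers all three protocols listed in the APA construction.
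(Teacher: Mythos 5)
Your proof is correct and follows essentially the same route as the paper's: both decompose $O^k_{APA}$ into a $\mathrm{Binomial}(n-m,\,P(X=k))$ genuine component plus the deterministic attacker count $\omega[k]$, then apply the bias--variance identity to $O^k_{APA}-Y^k$. Your added remark that the deterministic fake contribution eliminates any cross-covariance term is a slightly cleaner justification than the paper's appeal to independence, but the argument is the same.
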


\begin{proof}
See Appendix \ref{app:proof_errorAPA}. 
\end{proof}

Theorem~\ref{fiterrorAPA} gives the error $E_{freq}$ for the APA attack.

\begin{theorem}\label{fiterrorAPA}
    The error  $E_{freq}(O^k_{APA}, Y^k)$ between the observed frequency $O^k_{APA}$ under APA attack and the expected frequency $Y^k$ is
    \begin{align}
        E_{freq}(O^k_{APA}, Y^k) &= \frac{1}{n}\sum\limits_{\substack{k = 0}}^{d} \frac{(m \cdot P(X=k) - \omega[k])^2}{P(X=k)} + \frac{(n-m)\cdot d}{n}\nonumber
    \end{align}
\end{theorem}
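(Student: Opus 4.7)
The plan is to chain Theorem~\ref{errorAPA} with the definition of $E_{freq}$ in Equation~\eqref{eq:fit_error}. Since $E_{sq}(k) = (O^k - Y^k)^2$ by definition and $Y^k = n \cdot P(X=k)$, we can rewrite
\begin{equation}
E_{freq}(O^k_{APA}, Y^k) \;=\; \sum_{k=0}^{d} \frac{(O^k_{APA} - Y^k)^2}{Y^k} \;=\; \sum_{k=0}^{d} \frac{E_{sq}^{APA}(k)}{n \cdot P(X=k)} \nonumber
\end{equation}
so the statement reduces to a single direct substitution followed by an algebraic simplification.

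First I would plug the expression from Theorem~\ref{errorAPA} for $\mathbb{E}[E_{sq}^{APA}(k)] = (m \cdot P(X=k) - \omega[k])^2 + (n-m)P(X=k)(1-P(X=k))$ into the numerator. This immediately splits the sum into two pieces: the quadratic attacker-structure term, which already matches the first term of the claimed formula after pulling out the $1/n$ and leaving a $1/P(X=k)$ in the summand, and a residual variance term of the form $\frac{(n-m)}{n}\sum_{k=0}^{d} (1-P(X=k))$.

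The one step that requires a touch of care, and is really the only substantive bookkeeping, is collapsing this residual sum. Since $k$ ranges over $d+1$ values $\{0,1,\ldots,d\}$ while $\sum_{k=0}^d P(X=k) = 1$ (as $\{P(X=k)\}$ is a proper probability mass function of $X \sim B(d,\tilde{p})$), we have $\sum_{k=0}^{d}(1-P(X=k)) = (d+1) - 1 = d$. This yields exactly $\frac{(n-m)\cdot d}{n}$, matching the second term of the theorem.

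I expect no real obstacle here since the heavy lifting is already done by Theorem~\ref{errorAPA}; the only place to slip would be miscounting the $d+1$ terms in the residual sum or forgetting to factor the $1/n$ from $Y^k = n\cdot P(X=k)$ uniformly across both summands. Everything else is symbolic simplification, and the derivation mirrors the proof of Theorem~\ref{fiterrorMGA} with the MGA-specific concentration of the attacker's reports at $k=l_g$ replaced by the general APA distribution $\omega[k]$ over all $k$.
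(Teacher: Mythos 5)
Your proposal is correct and follows essentially the same route as the paper's own proof: substitute the expression from Theorem~\ref{errorAPA} into the chi-square sum $\sum_{k}(O^k_{APA}-Y^k)^2/Y^k$ with $Y^k = n\cdot P(X=k)$, split into the attacker-structure term and the variance term, and collapse $\sum_{k=0}^{d}(1-P(X=k)) = (d+1)-1 = d$. You even correctly flag the one place where care is needed (the $d+1$ summands), so nothing is missing.
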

\begin{proof}
    See Appendix \ref{app:proof_fiterrorAPA}
\end{proof}

Unlike in MGA and MGA-A, Theorem~\ref{errorAPA} shows that $E_{sq}^{APA}(k)$ in APA is affected by not only $m$ but also $\omega[k]$. A similar relationship is observed in $E_{freq}$ in Theorem~\ref{fiterrorAPA}. In other words, the attacker can hide traces by adjusting $\omega[k]$ for different $k$ to offset the impact of $m$, which makes the detection challenging in practice. We further theoretically analyze the detection performance under different attacks in Section~\ref{Performance_Analysis}.

\subsubsection{Put All Together} \label{put_all_together}
\Diffstats employs $E_{sq}$ and $E_{freq}$ to identify potential bogus reports, as in Algorithm~\ref{algorithm_Diffstats}. Specifically, $E_{sq}$ is adopted to find fake user candidates, while $E_{freq}$ is subsequently used to minimize false positive rates and refine the final result.

To generate a candidate set $\mathcal{U}_s$ of fake users, we first determine $\delta = \arg\min_{k\in \mathcal{K}}E_{sq}(k)$ for $\mathcal{K}=[0,1,...,k]$. Subsequently, we add users into $\mathcal{U}_s$ whose reports contain $k$-bit 1's for $k \in \mathcal{K} \setminus {\delta}$, as their $E_{sq}(k)$ is relatively large compared to $E_{sq}(\delta)$. However, the derived $\mathcal{U}_s$ may contain genuine users.


To reduce false positives, we measure the error $E_{freq}$ of a group of ``genuine'' users   $\{\mathcal{U}\backslash\mathcal{U}_s\}$. The smaller the derived $E_{freq}$,  the higher the likelihood that $\mathcal{U}_s$ is malicious. Note that this counters the intuition of directly measuring the distance between malicious and benign groups. This is because $O^k$ and $Y^k$ are anticipated to characterize statistically similar populations by definition, and in practice, we only know the expected pattern $Y^k$ of genuine users.  

To filter out ``clean'' users, we examine the behavioral differences between attackers and benign users. We observe that the fake users consistently set corresponding bits of the target items to 1 in their reports, which is unlikely among genuine users. 
After determining the common support set $\mathcal{S} = \{S_i\}_{i\in[1,...,d]}$ by summing up the bits at $i$-th position of the user reports, we derive top-$L$ supported items $\mathcal{S}_L$ from $\mathcal{S}$.  For all combinations $\mathcal{P}({\mathcal{S}_L})$ of $\mathcal{S}_L$, \Diffstats identifies a subset $\mathcal{U}_{sc}$ of $\mathcal{U}_s$ and computes $E_{freq}$ for ``clean'' users in $\{\mathcal{U}\backslash \mathcal{U}_{sc}\}$. If the error is smaller than the current minimum, $\mathcal{U}_{sc}$ is likely to be malicious since the attacker aims to maximize the attack gain.   The algorithm will eventually determine a group of users $\mathcal{U}_{f}$ as fake users by iterating through all $k\in \mathcal{K}$. 

\subsubsection{Performance Analysis of \Diffstats}\label{Performance_Analysis} We theoretically analyze the performance of the detection for different attacks. In MGA-A, the attacker can adjust $r$ to evade FIAD detection (e.g., $r\le 2$ in \cite{cao2021data}). Theoretically, a large $r$ helps \Diffstats distinguish honest and malicious users through common support. On the other hand, the candidate set $\mathcal{U}_s$ also facilitates this process. The experiments show that our detection can effectively identify fake users even with a small $r$. We analyze the error relationship $R$ between APA and MGA (MGA-A) in Theorem~\ref{errorComp}.

\begin{theorem}\label{errorComp}
    For $k = \lfloor p + (d - 1)q \rfloor$ the relationship between $\mathbb{E}\left[E^{MGA}_{sq}(k)\right]$ and $ \mathbb{E}\left[E^{APA}_{sq}(k)\right]$ satisfies
    \begin{equation}\label{APA_vs_MGA}
      \left\{\begin{array}{ll}
          \mathbb{E}\left[E^{MGA}_{sq}(k)\right] = \mathbb{E}\left[E^{APA}_{sq}(k)\right], & \mbox{if} \ \omega[k] = m, \\ \\
          \mathbb{E}\left[E^{MGA}_{sq}(k)\right] > \mathbb{E}\left[E^{APA}_{sq}(k)\right], & \mbox{if} \ \omega[k] < m \nonumber
     \end{array}
     \right.
\end{equation}
\end{theorem}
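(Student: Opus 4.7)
The plan is to start by substituting $k = l_g = \lfloor p + (d-1)q \rfloor$ directly into the closed-form expressions from Theorem~\ref{errorMGA} and Theorem~\ref{errorAPA}. Since both theorems yield the same variance term $(n-m)P(X=k)(1-P(X=k))$ for the observed frequency, this summand cancels on both sides and the comparison reduces to the deterministic parts $m^{2}(P(X=k)-1)^{2}$ versus $(m\cdot P(X=k) - \omega[k])^{2}$.

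First I would handle the equality case. If $\omega[k] = m$, I simply factor $m$ out of the APA expression:
\begin{equation*}
(m\cdot P(X=k) - m)^{2} = m^{2}(P(X=k) - 1)^{2},
\end{equation*}
which coincides with the MGA expression. The equality of the two expected errors then follows immediately.

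Next, for the strict inequality case $\omega[k] < m$, I would introduce the shorthand $a = m\cdot P(X=k)$ and aim to show $(m-a)^{2} > (\omega[k]-a)^{2}$. Because $P(X=k) < 1$, we have $m - a > 0$. I would then split into two subcases according to the sign of $\omega[k] - a$. If $\omega[k] \geq a$, then $0 \leq \omega[k] - a < m - a$ directly implies the squared inequality. If $\omega[k] < a$, both $m-a$ and $a-\omega[k]$ are positive, so the inequality is equivalent to $m + \omega[k] > 2m\cdot P(X=k)$, which holds since $\omega[k] \geq 0$ and $P(X=k) < 1/2$.

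The main obstacle is justifying $P(X=k) < 1/2$ at $k = l_g$ in the subcase $\omega[k] < a$. Fortunately, the paper already establishes (via the normal approximation $X \sim N(d\tilde{p}, \sqrt{d\tilde{p}(1-\tilde{p})})$) that $\mathbb{E}[P(X=l_g)] \approx 1/\sqrt{2\pi d\tilde{p}(1-\tilde{p})}$, which is well below $1/2$ for the domain sizes of interest. I would therefore cite that discussion in Section~\ref{sec_dist_discrepancy} rather than re-derive it, keeping the proof short and self-contained with respect to the results already proved in the excerpt.
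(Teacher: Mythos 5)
Your proposal is correct and follows essentially the same route as the paper's proof: both cancel the common variance term, reduce the claim to comparing $m^{2}(P(X=k)-1)^{2}$ with $(m\cdot P(X=k)-\omega[k])^{2}$, and rely on the normal-approximation bound $P(X=l_g)\le 1/\sqrt{2\pi}<1/2$ to settle the sign. The only difference is cosmetic: the paper factors the difference as $(\omega[k]-m)\bigl(m(2P(X=k)-1)-\omega[k]\bigr)$ and checks signs, whereas you compare the two squares by a case split on the sign of $\omega[k]-m\cdot P(X=k)$.
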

\begin{proof}
    See Appendix~\ref{app:proof_errorComp}
\end{proof}

In addition to $m$, $\mathbb{E}[E_{sq}^{APA}(k)]$ is also affected by $\omega[k]$, which is the number of attacker vectors where $k$ bits are set to 1. Adjusting $\omega[k]$ will effectively change the stealthiness of APA. Theorem~\ref{errorComp} shows that when $\omega[k=l_g] = m$, APA is in the worst-case scenario and becomes equivalent to MGA; when $\omega[k] < m$, APA is stealthier than MGA. According to Theorem~\ref{errorAPA}, when $\omega[k] = m \cdot P(X = k)$ for all $k$,  $\mathbb{E}[E_{sq}^{APA}(k)]$ reaches its lower bound $Var(O^k_{APA})$, which represents the optimal attack strategy for APA. In this case, it is difficult to identify suspicious users even if $m$ is large. To address this issue, we present a new detection method below.

\subsection{Abnormal Statistics Detection}\label{asd}
We propose \textit{abnormal statistics detection} (\ASD) to detect the APA attack and the MGA and MGA-A attacks in GRR. The design is inspired by the observation that the sum of the true item counts $\sum_{i=1}^d C_i$ should not exceed $n$; thus the sum of all perturbed counts $\sum_{i=1}^d \tilde{C}_i > n$ may indicate the presence of the attack that promotes the target items by amplifying the corresponding frequencies. However, $\sum\tilde{C}_i$ may exceed $n$ due to random LDP noise, making the na\"ive use of this condition unreliable. To address this challenge, we derive a threshold $\xi$ that divides $\{\tilde{C}_i\}_{i\in[d]}$ into two subsets, $\mathcal{A}=\{\tilde{C}_i|\tilde{C}_i>\xi\}$ and $\mathcal{B}=\{\tilde{C}_i|\tilde{C}_i \le\xi\}$, such that for the items in $\mathcal{A}$, the sum of their perturbed counts should be close to $n$ but still not surpass it. Since the attack gain is unlikely to exist in the lower count domain $\mathcal{B}$, $\sum_{\tilde{C}_i\in \mathcal{A}}\tilde{C}_i > n $ will be a tighter and more sensitive detection condition.  We elaborate on how to find such $\xi$ below.

\subsubsection{Determining $\xi$}\label{Theoretical_ASD} Directly determining $\xi$ as the lower bound of $\mathcal{A}$ is challenging. Instead, we attempt to find $\xi$ as the upper bound of $\mathcal{B}$, which is anticipated to contain much fewer items compared to $\mathcal{A}$. As a result, $\mathcal{A}$ can be derived by excluding $\mathcal{B}$ in the full domain. 

Intuitively, $\mathcal{B}$ should at least contain items whose true frequencies $f_i = 0$, because the expected sum of their perturbed counts would not contribute to $n$. Thus, we are looking for a heuristic $\mathcal{B}=\{\tilde{C}_i|f_i = 0\}$ for $i\in[d]$. It is difficult to derive the precise upper bound $\xi$ of $\mathcal{B}$ without underlying data knowledge, such as true item frequencies. Instead, we aim to find an approximation $\xi(\gamma)$ with confidence level $\gamma$. To this end, we first estimate the distribution of $\tilde{C}_i$. Prior work~\cite{wang2017locally} shows that $\tilde{C}_i \sim N(n\cdot f_i, \ \sigma_i^2)$. 
Theorem~\ref{threshold_ASD} gives the upper bound $\xi(\gamma)$ of the set $\mathcal{B}$ below.
\begin{theorem}\label{threshold_ASD}
   For a given confidence level $\gamma$, the upper bound of $\mathcal{B}=\{\tilde{C}_i|f_i = 0\}$ for $i\in[d]$ is
    \begin{equation}
        \xi(\gamma) = Z(\gamma) \cdot \sqrt{\frac{nq(1-q)}{(p-q)^2}} \nonumber
    \end{equation}
\end{theorem}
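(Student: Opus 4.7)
My plan is to specialize the known normal approximation $\tilde C_i \sim N(n f_i, \sigma_i^2)$ cited from~\cite{wang2017locally} to the subdomain $\mathcal{B}=\{\tilde C_i \mid f_i = 0\}$ and then read off the one-sided upper quantile at confidence level $\gamma$.

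First, I would recall the aggregation rule for the CFOs in Section~\ref{CFOs}: the debiased estimate has the form $\tilde C_i = (C_i - nq)/(p-q)$, where $C_i$ counts the reports that support item $i$ and $p, q$ are the protocol's keep/flip probabilities. Taking expectations gives $\mathbb{E}[\tilde C_i] = n f_i$, so for items in $\mathcal{B}$ the mean collapses to $0$.

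Next, I would compute the variance in this degenerate case. When $f_i = 0$, no user's true value is $i$, so every user contributes to $C_i$ only through noise, i.e., a single independent Bernoulli trial with success probability $q$ (this is the common feature across OUE, OLH, HST, and GRR: the probability that a report supports a non-held item is exactly $q$). Consequently $C_i \sim \mathrm{Bin}(n,q)$, so $\mathrm{Var}(C_i) = n q (1-q)$, and debiasing by the factor $1/(p-q)$ rescales the variance by $1/(p-q)^2$. Therefore $\sigma_0^2 := \mathrm{Var}(\tilde C_i \mid f_i = 0) = n q(1-q)/(p-q)^2$, and the normal approximation becomes $\tilde C_i \sim N(0,\sigma_0^2)$ on $\mathcal{B}$.

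Finally, I would extract the confidence bound. For a centered normal with standard deviation $\sigma_0$, the one-sided upper quantile at level $\gamma$ is $Z(\gamma)\,\sigma_0$, i.e., $\Pr[\tilde C_i \le Z(\gamma)\,\sigma_0] = \gamma$. Substituting the expression for $\sigma_0$ yields the claimed $\xi(\gamma) = Z(\gamma)\sqrt{n q(1-q)/(p-q)^2}$, which by construction is the desired upper envelope of $\mathcal{B}$. The main obstacle — and thus where I would concentrate the justification — is the protocol-agnostic identification of the ``noise-only'' Bernoulli parameter as $q$; once this symmetry observation is in hand across the target CFOs, the rest reduces to a standard Gaussian tail computation.
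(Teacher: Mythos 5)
Your proposal is correct and follows essentially the same route as the paper: both specialize the normal approximation $\tilde C_i \sim N(nf_i,\sigma_i^2)$ to items with $f_i=0$ and read off the $\gamma$-quantile $Z(\gamma)\cdot\sigma_i$ with $\sigma_i^2 = nq(1-q)/(p-q)^2$. The only difference is that you derive the variance explicitly from $C_i\sim\mathrm{Bin}(n,q)$ under $f_i=0$, whereas the paper simply imports it from prior work; this is a harmless (indeed helpful) elaboration, with the minor caveat that for OLH the ``noise-only'' support probability is $1/g$ rather than the perturbation parameter $q$, so $q$ must be read as the protocol-specific support probability.
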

\begin{figure*}[htbp]
\centering
\includegraphics[scale=0.389]{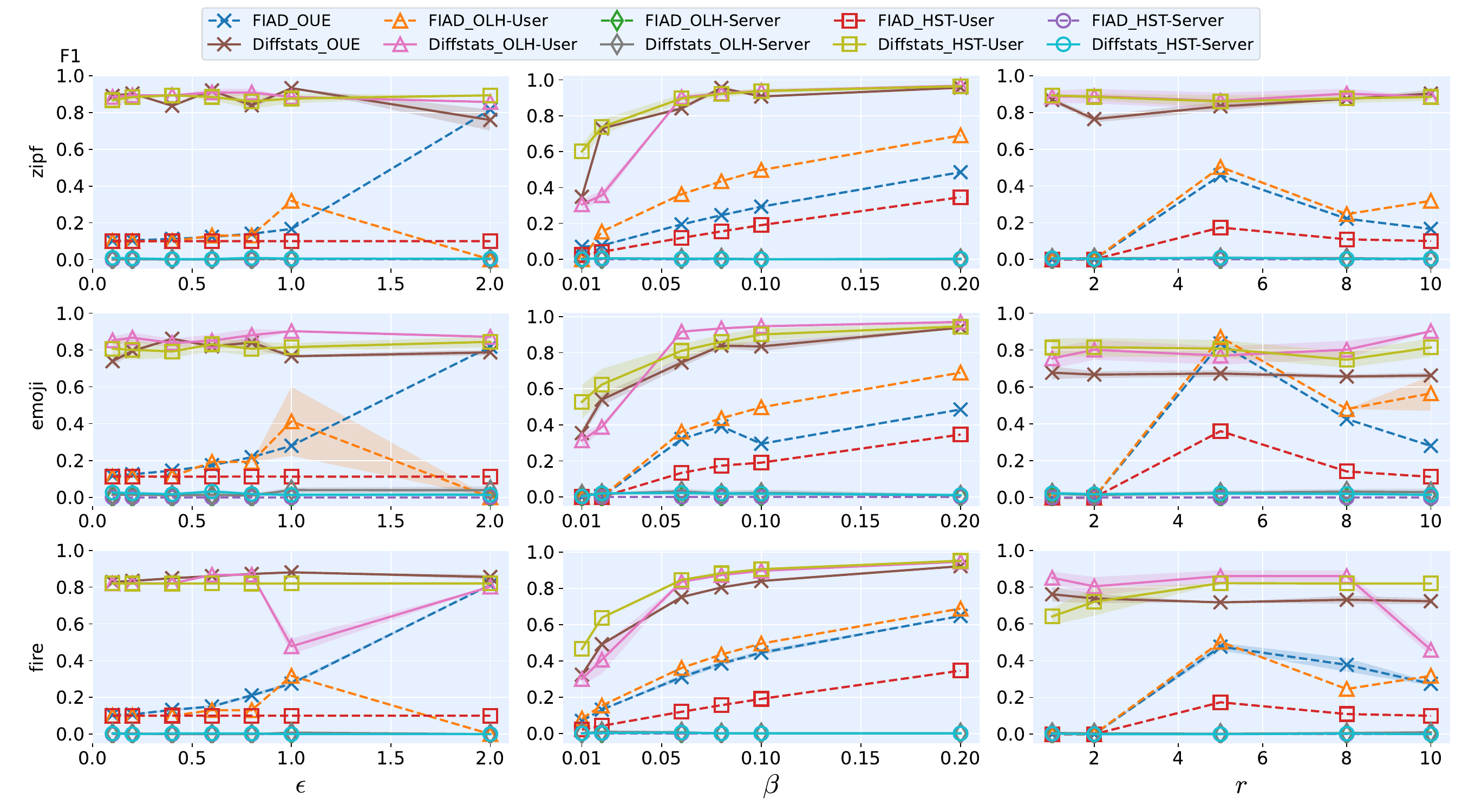}
 \caption{Performance comparison between the proposed \Diffstats and FIAD \cite{cao2021data} against the MGA attack.}
 \label{Fake_user_detection_MGA}
 \Description{Performance comparison between the proposed \Diffstats and FIAD \cite{cao2021data} against the MGA attack.}
\end{figure*}

\begin{proof}
See Appendix~\ref{app:proof_threshold_ASD}.
\end{proof}

$\xi(\gamma)$ indicates that we have confidence $\gamma$ that all the items in set $\mathcal{B}$ have their true frequencies $f_i = 0$. A larger $\gamma$ also results in a larger $\xi(\gamma)$, which increases the likelihood that all the items with $f_i = 0$ are included in $\mathcal{B}$. On the other hand, $\mathcal{A}$ may also contain zero-frequency items with probability $1-\gamma$. Thus the introduced error $Err$ may cause expected sum $\mathbb{E}[\sum_{\tilde{C}_i \in \mathcal{A}}\tilde{C}_i] > n$ and affect detection accuracy. It is challenging to precisely calculate $Err$ since we need to know the true frequencies $f_i$ (see Appendix~\ref{app:ASD_error}). Alternatively, we approximate the error as
    \begin{equation}
        Err = |\mathcal{B}| \cdot \xi(\gamma)\cdot  (1-\gamma) \nonumber
    \end{equation}
where $|\mathcal{B}|$ is the size of the set $\mathcal{B}$ and $\xi(\gamma)\cdot  (1-\gamma)$ indicates the occurrence of an error event that $\mathcal{A}$ contains an item of $f_i=0$ with probability $1-\gamma$. This error is an approximation of the true error (Eq.~\eqref{true_error} in Appendix~\ref{app:ASD_error}) and represents the upper-bound of the error corresponding to the items with $f_i=0$.

Therefore, we formulate the detection as an optimization problem to find the minimum  $\gamma$ 
\begin{equation}
\begin{array}{rcclcl}
\displaystyle \min & \gamma\\
\textrm{s.t.} & Err < \lambda \cdot n\\
&0<\lambda<1    \\ \nonumber
\end{array}
\end{equation}
The condition confines the error below a predefined threshold $\lambda \cdot n$ while ensuring a minimum $\gamma$ to keep $\mathbb{E}[\sum_{\tilde{C}_i\in \mathcal{A}} \tilde{C}_i]$ close to $n$.

\subsubsection{Performance Analysis of \ASD}\label{Performance_Analysis_ASD} 
Since we estimate $\tilde{C}_i \sim N(n\cdot f_i, \ \sigma_i^2)$, a larger $n$ makes the distribution closer to the normal distribution according to the central limit theorem and thus may lead to better detection results. We may also derive $\sum_{i=1}^d \tilde{C}_i\sim N(n, \sum_{i=1}^d \sigma^2_i)$ due to independent $\tilde{C}_i$. We can see that with more items, the variance will increase and thus it is difficult to satisfy $\mathbb{E}[\sum_{\tilde{C}_i \in \mathcal{A}}\tilde{C}_i] \le n$. The detection accuracy is also expected to decline. On the other hand, a weak attack with a large $\epsilon$ and a small $\beta$ will cause the attacked $\tilde{C}_i$ to be close to their expected true count $n\cdot f_i$. Therefore, the detection will be less effective by satisfying the condition $\sum_{\tilde{C}_i \in \mathcal{A}}\tilde{C}_i > n$.

\section{Detection Evaluation}\label{sec_detection_evaluation}
\subsection{Experimental Setup}
\textit{Datasets.} We use one synthetic dataset and two real-world datasets to evaluate the proposed detection methods.
\begin{itemize}
\item \textbf{zipf.} We synthesize a dataset containing $d = 1,024$ items and $n = 1,000,000$ users satisfying a zipf distribution, where $s = 1.5$.
\item \textbf{emoji.} 
We use $d=1,496$ emojis from ~\cite{emoji_dataset} and treat their average ``sent'' statistics as the number of users, i.e., $n=218,477$.

\item \textbf{fire.} The dataset contains information on calls for the San Francisco Fire Department \cite{sf_fire_department_2024}. 
We only use the ``Alarms'' records and take each unique unit ID as an item. Therefore, there are $d = 296$ items from $n = 723,090$ users.
\end{itemize} 

The datasets and their pre-processing details, along with the source code, are provided at \textbf{\url{https://github.com/Marvin-huoshan/MDPA_LDP/}}. 
\vspace{2pt}
\textit{Settings.} The experiments were conducted on a server with Ubuntu 22.04.5 LTS, 2$\times$ AMD EPYC 9554 CPU, and 768GB RAM.
We use a default $\epsilon=1$ for LDP and $\beta=0.05$, $r=10$ for all attacks unless specified in the paper. We also set $r'=4$ by default for APA for balanced stealthiness and efficacy of the attack.
We follow the optimal APA attack strategy by setting $\omega[k] = \lfloor m \cdot P(X = k)\rfloor$ for all $k$ as discussed in Section~\ref{Performance_Analysis}. In OLH-User, we allow the attacker to find an optimal hash function $h^{(j)}$ that maps all target items in $T$ to $v_h^{(j)}$.

For \Diffstats detecting fake users, we select $L=6$ by default and set the hyperparameter $\lambda=0.02$ in \ASD detection.  

\vspace{2pt}
\textit{Metrics.} We measure $F1$ score for our method \Diffstats to correctly identify fake users while minimizing false positives and false negatives. The results are an average of 10 trials. We measure the detection \textit{accuracy} for \ASD over a total of 40 instances including 20 attack and 20 no-attack cases. 

In the experiment, we compare the proposed \Diffstats with the state-of-the-art fake user detection, FIAD \cite{cao2021data} and only shows the performance of \ASD since no similar method exists for detecting MGA and MGA-A attacks on GRR and APA attack.

\subsection{Results for \Diffstats}\label{sec:Result_Diffstats}
The detection results against MGA are shown in Figure ~\ref{Fake_user_detection_MGA} and those against MGA-A in Figure~\ref{Fake_user_detection_MGA-A}. A higher $F1$ score indicates better detection performance, and shaded regions represent 95\% confidence intervals (CI). Since the majority of the measured CIs are narrow, i.e., typically $[0.02,0.05]$, which makes them difficult to see in the figures.

\vspace{2pt}
\textbf{Detection for MGA.} In general, our experiment shows that \Diffstats outperforms FIAD by a large margin across all datasets with varying parameters. Both detection methods become less effective under the server setting of OLH and HST. This is because those protocols are naturally more robust to the attacks, which is consistent with prior results \cite{li2024robustness}. In addition, we have the following key observations.
\begin{itemize}
\item \Diffstats performs almost constantly with $F1$ greater than 0.8 versus $F1$ of FIAD below 0.4 for a common setting of $\epsilon \le 1$ across all datasets. 
\vspace{2pt}
\item We observe a performance drop of our method in the fire dataset of a small domain size $d$ when $\epsilon=1$ in OLH-User. This aligns with our analysis of the error $E_{freq}$ in Eq.~\eqref{eq:fiterrorMGA} that a small $d$ leads to a loss in detection performance. At the same time, the attack becomes weak when it is difficult to find a hash function that covers all target items as $\epsilon$ grows. However, performance bounces back quickly with a larger $\epsilon$ since fewer genuine users are included in the common support in \Diffstats, thus reducing false positives. In contrast, the performance of FIAD for OLH-User is significantly affected given a large $\epsilon$. 

\vspace{2pt}
\item We observe that the $F1$ of FIAD first increases and then reduces with growing $\epsilon$ for OLH-User with all three datasets. Given a smaller $\epsilon$, though a stronger attack presents for it is easier for more fake users to find hash functions to cover all target items, more LDP noise also results in higher false positives, which is more prominent in detection performance (i.e., low $F1$ score). As the negative impact of LDP noise diminishes much faster than the difficulty of obtaining ideal hash functions with increasing $\epsilon$, the combined effect makes the $F1$ score of FIAD first peak at about $\epsilon = 1$ and then decrease rapidly. This phenomenon is consistent across all datasets, with a wider CI in the emoji dataset. This is because the number of fake users in the emoji is much smaller than that in the other two, leading to a more ``volatile'' measurement.
\vspace{2pt}
\item While our method performs stably in OUE across all $\epsilon$ with a narrow $95\%$ CI typically ranging from $0.02$ to $0.05$, FIAD only becomes more effective as $\epsilon$ increases. With less LDP noise, the identified frequent itemsets in FIAD are likely to include the target items. 

\vspace{2pt}
\item The performance (i.e., $F1$ and CI) of both detection methods improves for stronger attacks with more malicious users (i.e., larger $\beta$) except for the server settings. Compared to FIAD, our method stays sensitive to small $\beta$ and grows much faster to an $F1$ score of above 0.9 as $\beta$ increases. 
\vspace{2pt}
\item \Diffstats is insensitive to changes in the target set size $r$, consistently showing superior performance. The $F1$ score drops when $r$ becomes larger for OLH-User as the attacker fails to find a qualified hash function. On the other hand, FIAD cannot identify fake users when $r \leq 2$.  It performs best at $r = 5$ and worsens as $r$ further grows. This is because a relatively large target set helps FIAD filter out malicious items in general. However, excessively large $r$ leads to a higher false positive rate in FIAD. 
\end{itemize}

\begin{figure*}[htbp]
\centering
\includegraphics[scale=0.389]{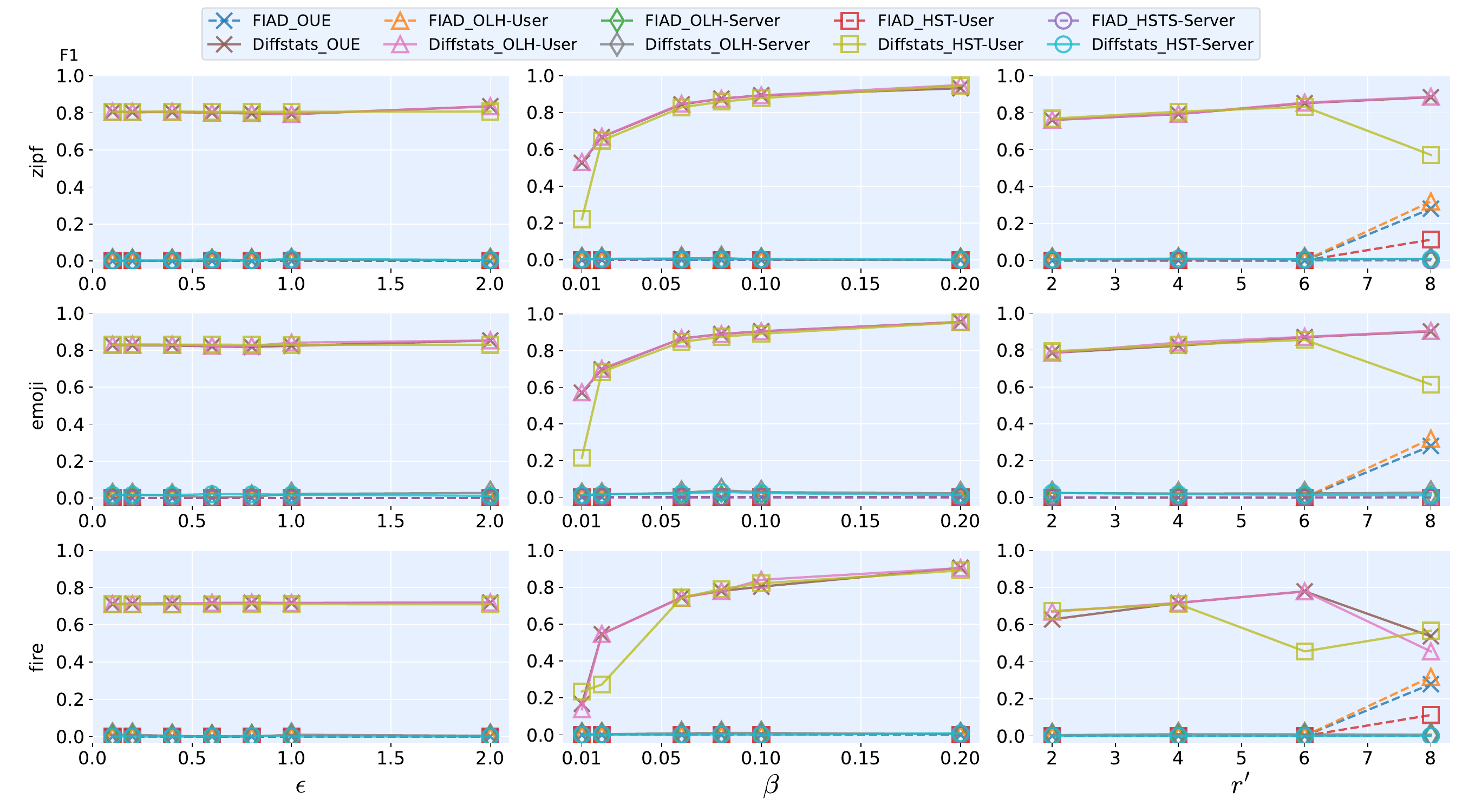}
 \caption{Performance comparison between the proposed \Diffstats and FIAD \cite{cao2021data} against the MGA-A attack.}
 \label{Fake_user_detection_MGA-A}
\end{figure*}

\vspace{2pt}
\textbf{Detection for MGA-A.} For MGA-A attack, Figure~\ref{Fake_user_detection_MGA-A} shows that:  
\begin{itemize}
    \item The performance gap between our method and FIAD becomes even clearer against the MGA-A attack. \Diffstats shows a constantly high and stable $F1$ score while FIAD fails to detect fake users in most cases. 
    \vspace{2pt}
    \item Consistent with the MGA case, our detection is more effective when a strong attack with a large $\beta$ is present. 
    \vspace{2pt}
    \item Compared to other protocols, \Diffstats in HST-User is more subject to the changes of $r'$. Especially when $r'$ gets larger, higher false positives may appear as more items exist in the common support. 
    \vspace{2pt}
    \item \Diffstats performs better in zipf and emoji since the fire dataset has a small domain size $d$, making the detection more challenging.
\end{itemize}

We further evaluate our approach and FIAD for APA with the optimal strategy described in Section~\ref{Performance_Analysis}. The experiment confirms our theoretical analysis that APA is much stealthier than existing data poisoning attacks of similar \IGR and evades detection in all three datasets, i.e., $F1=0$ with $r'=4$, $\epsilon\in [0.1, 1]$ and $\beta\in [0.01, 0.1]$. In addition, neither FIAD nor ours supports GRR. 

\vspace{2pt}
\textbf{Time Cost.} We measure the time cost for \Diffstats and FIAD in Table~\ref{tab:running_time}. 
Our experiments demonstrate up to about 120$\times$ detection speedup versus FIAD, with a 1-hour timeout\footnote{$37.04\%$ of FIAD instances were not recorded due to the timeout.}. Thus, \Diffstats is more friendly to time-sensitive applications.

\begin{table}[bthp]
\setlength\tabcolsep{1pt}
\footnotesize
\centering
\caption{Time cost in seconds of FIAD and \Diffstats. Default $\epsilon = 1$, $r = 10$, and $\beta = 0.05$. ``--'' indicates passing the 1-hour timeout.}
\label{tab:running_time}
\begin{tabular}{cccccccccccc}
\hline
\multirow{2}{*}{Dataset} & \multirow{2}{*}{} & \multirow{2}{*}{} & \multicolumn{1}{c}{OUE} & & \multicolumn{1}{c}{OLH-User} & & \multicolumn{1}{c}{OLH-Server} & & \multicolumn{1}{c}{HST-User} & & \multicolumn{1}{c}{HST-Server} \\
 &  &  & FIAD~/~Ours & & FIAD~/~Ours & & FIAD~/~Ours & & FIAD~/~Ours & & FIAD~/~Ours \\ \hline
\multicolumn{1}{l|}{} & \multicolumn{1}{l}{} & $0.1$ & --~/~117 & & --~/~116 & & --~/~103 & & --~/~117 & & --~/~104\\
\multicolumn{1}{l|}{} & $\epsilon$ & $0.5$ & --~/~122 & & --~/~124 & & --~/~107 & & --~/~115 & & --~/~102\\
\multicolumn{1}{l|}{} & \multicolumn{1}{l}{} & $1$ & --~/~129 & & 2,441~/~130 & & 2,389~/~109 & & --~/~114 & & --~/~101\\ \cline{2-12} 
\multicolumn{1}{l|}{} &  & $1\%$ & 2,677~/~118 & & 2,336~/~120 & & 2,270~/~105 & & --~/~106 & & --~/~100\\
\multicolumn{1}{l|}{zipf} & $\beta$ & $5\%$ & --~/~128 & & 2,368~/~131 & & 2,357~/~107 & & --~/~117 & & --~/~100\\
\multicolumn{1}{l|}{} &  & $10\%$ & --~/~140 & & 2,531~/~146 & & 2,432~/~115 & & --~/~129 & & --~/~103\\ \cline{2-12} 
\multicolumn{1}{l|}{} &  & $1$ & 2,586~/~123 & & 2,190~/~126 & & 2,326~/~111 & & --~/~113 & & --~/~101\\
\multicolumn{1}{l|}{} & $r$ & $5$ & 2,826~/~127 & & 3,294~/~130 & & 2,390~/~115 & & --~/~119 & & --~/~101\\
\multicolumn{1}{l|}{} &  & $10$ & --~/~126 & & 2,355~/~129 & & 2,430~/~111 & & --~/~117 & & --~/~99\\ \hline

\multicolumn{1}{l|}{} & \multicolumn{1}{l}{} & $0.1$ & 3,491~/~29 && --~/~28 && --~/~25 && --~/~28 && --~/~24\\
\multicolumn{1}{l|}{} & $\epsilon$ & $0.5$ & 2,230~/~29 && 1,761~/~27 && 1,876~/~25 && --~/~27 && --~/~24\\
\multicolumn{1}{l|}{} & \multicolumn{1}{l}{} & $1$ & 1,171~/~30 & & 1,045~/~28 && 1,077~/~25 && --~/~27 && --~/~24\\ \cline{2-12} 
\multicolumn{1}{l|}{} &  & $1\%$ & 1,206~/~27 && 1,062~/~27 && 1,075~/~25 && --~/~25 && --~/~25\\
\multicolumn{1}{l|}{emoji} & $\beta$ & $5\%$ & 1,169~/~28 && 1,030~/~30 && 1,056~/~24 && --~/~27 && --~/~25\\
\multicolumn{1}{l|}{} &  & $10\%$ & 3,073~/~31 && 1,007~/~32 && 1,072~/~26 && --~/~30 && --~/~23\\ \cline{2-12} 
\multicolumn{1}{l|}{} &  & $1$ & 1,214~/~28 && 976~/~31 && 1,052~/~25 && --~/~27 && --~/~24\\
\multicolumn{1}{l|}{} & $r$ & $5$ & 1,187~/~29 & & 1,037~/~30 && 1,070~/~25 && --~/~27 && --~/~24\\
\multicolumn{1}{l|}{} &  & $10$ & 1,190~/~29 && 1,054~/~30 && 1,109~/~25 && --~/~27 && --~/~24\\ \hline

\multicolumn{1}{l|}{} & \multicolumn{1}{l}{} & $0.1$ & 1,043~/~109 && --~/~106 && 589~/~99 && 2,414~/~106 && 587~/~101\\
\multicolumn{1}{l|}{} & $\epsilon$ & $0.5$ & 544~/~110 && --~/~112 && 289~/~108 && 1,786~/~108 && 592~/~99\\
\multicolumn{1}{l|}{} & \multicolumn{1}{l}{} & $1$ & 341~/~118 && 228~/~122 && 182~/~113 && 1,986~/~109 && 577~/~99\\ \cline{2-12} 
\multicolumn{1}{l|}{} &  & $1\%$ & 225~/~118 && 209~/~113 && 176~/~112 && 1,831~/~101 && 589~/~96\\
\multicolumn{1}{l|}{fire} & $\beta$ & $5\%$ & 337~/~122 && 230~/~123 && 176~/~111 && 1,869~/~106 && 587~/~98\\
\multicolumn{1}{l|}{} &  & $10\%$ & 458~/~127 && 277~/~129 && 191~/~114 && 1,792~/~115 && 590~/~99\\ \cline{2-12} 
\multicolumn{1}{l|}{} &  & $1$ & 195~/~120 && 170~/~122 && 170~/~111 && 557~/~106 && 586~/~97\\
\multicolumn{1}{l|}{} & $r$ & $5$ & 201~/~117 && 307~/~122 && 184~/~115 && 728~/~107 && 609~/~97\\
\multicolumn{1}{l|}{} &  & $10$ & 396~/~125 && 238~/~124 && 180~/~112 && 1,435~/~107 & & 616~/~98\\ \hline
\end{tabular}
\end{table}

\subsection{Results for \ASD}\label{sec:Results_ASD}

We assess the detection accuracy of the proposed \ASD method under challenging circumstances where malicious users cannot be identified under APA attack (Table~\ref{tab:ACC_table_ASD}), and MGA and MGA-A attacks on GRR (Table~\ref{tab:ACC_table_ASD_GRR}). 
\begin{table}[htbp]
\centering
\caption{Detection accuracy with $95\% $ CI of \ASD against APA. We set $\beta=0.1$ and $\epsilon=0.5$ by default.}
\label{tab:ACC_table_ASD}
\begin{tabular}{c|cccccccc}
\toprule
\multirow{1}{*}{Dataset} & \multirow{1}{*}{} & \multirow{1}{*}{} & \multicolumn{1}{c}{OUE} & \multicolumn{1}{c}{OLH-User} & \multicolumn{1}{c}{HST-User} \\
\midrule
\multirow{8}{*}{zipf} 
& \multirow{4}{*}{$\epsilon$} & 0.1 & 1.00~{\scriptsize [0.91, 1]} & 1.00~{\scriptsize [0.91, 1]} & 1.00~{\scriptsize [0.91, 1]}\\
& & 0.5 & 1.00~{\scriptsize [0.912, 1]} & 1.00~{\scriptsize [0.91, 1]} & 1.00~{\scriptsize [0.91, 1]}\\
& & 0.8 & 1.00~{\scriptsize [0.91, 1]} & 1.00~{\scriptsize [0.91, 1]} & 1.00~{\scriptsize [0.91, 1]}\\
& & 1 & 1.00~{\scriptsize [0.91, 1]} & 1.00~{\scriptsize [0.91, 1]} & 1.00~{\scriptsize [0.91, 1]}\\\noalign{\vskip 1pt}\cline{2-6}\noalign{\vskip 1pt}
& \multirow{4}{*}{$\beta$} & 1\% & 0.98~{\scriptsize [0.88, 0.99]} & 0.88~{\scriptsize [0.73, 0.95]} & 0.53~{\scriptsize [0.36, 0.68]}\\
& & 5\% & 1.00~{\scriptsize [0.91, 1]} & 1.00~{\scriptsize [0.91, 1]} & 1.00~{\scriptsize [0.91, 1]}\\
& & 7.5\% & 1.00~{\scriptsize [0.91, 1]} & 1.00~{\scriptsize [0.91, 1]} & 1.00~{\scriptsize [0.91, 1]}\\
& & 10\% & 1.00~{\scriptsize [0.91, 1]} & 1.00~{\scriptsize [0.91, 1]} & 1.00~{\scriptsize [0.91, 1]}\\
\midrule
\multirow{8}{*}{emoji} 
& \multirow{4}{*}{$\epsilon$} & 0.1 & 1.00~{\scriptsize [0.91, 1]} & 1.00~{\scriptsize [0.91, 1]} & 1.00~{\scriptsize [0.91, 1]}\\
& & 0.5 & 1.00~{\scriptsize [0.91, 1]} & 1.00~{\scriptsize [0.91, 1]} & 1.00~{\scriptsize [0.91, 1]}\\
& & 0.8 & 1.00~{\scriptsize [0.91, 1]} & 1.00~{\scriptsize [0.91, 1]} & 1.00~{\scriptsize [0.91, 1]}\\
& & 1 & 1.00~{\scriptsize [0.91, 1]} & 1.00~{\scriptsize [0.91, 1]} & 1.00~{\scriptsize [0.91, 1]} &\\ \noalign{\vskip 1pt}\cline{2-6}\noalign{\vskip 1pt} 
& \multirow{4}{*}{$\beta$} & 1\% & 0.50~{\scriptsize [0.33, 0.66]} & 0.50~{\scriptsize [0.33, 0.66]} & 0.50~{\scriptsize [0.33, 0.66]}\\
& & 5\% & 1.00~{\scriptsize [0.91, 1]} & 1.00~{\scriptsize [0.91, 1]} & 1.00~{\scriptsize [0.91, 1]}\\
& & 7.5\% & 1.00~{\scriptsize [0.91, 1]} & 1.00~{\scriptsize [0.91, 1]} & 1.00~{\scriptsize [0.91, 1]}\\
& & 10\% & 1.00~{\scriptsize [0.91, 1]} & 1.00~{\scriptsize [0.91, 1]} & 1.00~{\scriptsize [0.91, 1]}\\
\midrule
\multirow{8}{*}{fire} 
& \multirow{4}{*}{$\epsilon$} & 0.1 & 1.00~{\scriptsize [0.91, 1]} & 1.00~{\scriptsize [0.91, 1]} & 1.00~{\scriptsize [0.91, 1]}\\
& & 0.5 & 1.00~{\scriptsize [0.91, 1]} & 1.00~{\scriptsize [0.91, 1]} & 1.00~{\scriptsize [0.91, 1]}\\
& & 0.8 & 1.00~{\scriptsize [0.91, 1]} & 1.00~{\scriptsize [0.91, 1]} & 1.00~{\scriptsize [0.91, 1]}\\
& & 1 & 1.00~{\scriptsize [0.91, 1]} & 0.98~{\scriptsize [0.88, 0.99]} & 1.00~{\scriptsize [0.91, 1]}\\\noalign{\vskip 1pt}\cline{2-6}\noalign{\vskip 1pt}
& \multirow{4}{*}{$\beta$} & 1\% & 0.50~{\scriptsize [0.33, 0.66]} & 0.65~{\scriptsize [0.48, 0.79]} & 0.53~{\scriptsize [0.36, 0.68]}\\
& & 5\% & 1.00~{\scriptsize [0.91, 1]} & 1.00~{\scriptsize [0.91, 1]} & 1.00~{\scriptsize [0.91, 1]}\\
& & 7.5\% & 1.00~{\scriptsize [0.91, 1]} & 1.00~{\scriptsize [0.91, 1]} & 1.00~{\scriptsize [0.91, 1]}\\
& & 10\% & 1.00~{\scriptsize [0.91, 1]} & 1.00~{\scriptsize [0.91, 1]} & 1.00~{\scriptsize [0.91, 1]}\\
\bottomrule
\end{tabular}
\end{table}

\vspace{2pt}
\textbf{Detection for APA.} Table~\ref{tab:ACC_table_ASD} shows that our \ASD can effectively detect the APA attack with various $\epsilon$ irrespective of the underlying LDP protocols. In general, a small $\beta$ results in a decrease in accuracy since the attack also becomes weaker. We do not observe a performance drop until $\beta=0.01$, which indicates the effectiveness of \ASD even when the attacker controls a small portion of users in the system. The detection becomes stable with a narrower CI as $\beta$ increases. 
Our analysis in Section~\ref{Performance_Analysis_ASD} also shows that the performance improves with increasing $n$ and decreases with a larger $d$.  This is confirmed by our experiment, i.e., the best result is observed in zipf with $n=1,000,000$ and $d=1,024$, while the worst occurs in emoji with $n=218,477$ and $d=1,496$. 
With a small $\beta$, the detection performs better in OUE and OLH-User than in HST-User. This is because the attack under the same setting on HST-User is weaker than the other two.

\vspace{2pt}
\textbf{Detection for MGA and MGA-A in GRR.} Similar to the detection against APA attack, Table~\ref{tab:ACC_table_ASD_GRR} shows that \ASD consistently achieves high accuracy ($\ge 88\%$) for all settings. Unlike in APA, the detection is still effective even with small $\beta$ but is more subject to change of $\epsilon$. In other words, a small $\epsilon$ results in low detection accuracy.  The reason is that the variance of GRR is greater than that of other protocols, which introduces more LDP noise and negatively affects the detection performance.

\vspace{2pt}
\textbf{Time Cost.} Our experiment shows that \ASD is very fast, consistently under 1$s$ across all tested settings (see Table~\ref{tab_app:running_time_ASD_APA} and Table~\ref{tab_app:running_time_ASD_GRR}).

\begin{table}[tbp]
\centering
\caption{Detection accuracy with $95\% $ CI of \ASD for GRR. We set $\beta=0.1$, $\epsilon=0.5$ and $r=10$ by default.}
\label{tab:ACC_table_ASD_GRR}
\begin{tabular}{c|ccc|ccccc}
\toprule
\multirow{1}{*}{Dataset} & \multirow{1}{*}{} & \multicolumn{1}{c}{MGA} & \multicolumn{1}{c|}{} & \multirow{1}{*}{} & \multicolumn{1}{c}{MGA-A} & \multirow{1}{*}{}  \\
\midrule
\multirow{12}{*}{zipf} 
& \multirow{4}{*}{$\epsilon$} & 0.1 & 0.93~{\scriptsize [0.79, 0.98]} & \multirow{4}{*}{$\epsilon$} & 0.1 & 0.88~{\scriptsize [0.73, 0.95]}\\
& & 0.5 & 1.00~{\scriptsize [0.91, 1]} & & 0.5 & 1.00~{\scriptsize [0.91, 1]}\\
& & 0.8 & 1.00~{\scriptsize [0.91, 1]} & & 0.8 & 1.00~{\scriptsize [0.91, 1]}\\
& & 1 & 1.00~{\scriptsize [0.91, 1]} & & 1 & 1.00~{\scriptsize [0.91, 1]}\\
\noalign{\vskip 1pt}\cline{2-7}\noalign{\vskip 1pt} 
& \multirow{4}{*}{$\beta$} & 1\% & 1.00~{\scriptsize [0.91, 1]} & \multirow{4}{*}{$\beta$} & 1\% & 1.00~{\scriptsize [0.91, 1]}\\
& & 5\% & 1.00~{\scriptsize [0.91, 1]} & & 5\% & 1.00~{\scriptsize [0.91, 1]}\\
& & 7.5\% & 1.00~{\scriptsize [0.91, 1]} & & 7.5\% & 1.00~{\scriptsize [0.91, 1]}\\
& & 10\% & 1.00~{\scriptsize [0.91, 1]} & & 10\% & 1.00~{\scriptsize [0.91, 1]}\\\noalign{\vskip 1pt}\cline{2-7}\noalign{\vskip 1pt}  
& \multirow{4}{*}{$r$} & 1 & 1.00~{\scriptsize [0.91, 1]} & \multirow{4}{*}{$r^\prime$} & 2 & 1.00~{\scriptsize [0.91, 1]}\\
& & 2 & 1.00~{\scriptsize [0.91, 1]} & & 4 & 1.00~{\scriptsize [0.91, 1]}\\
& & 5 & 1.00~{\scriptsize [0.91, 1]} & & 6 & 1.00~{\scriptsize [0.91, 1]}\\
& & 10 & 1.00~{\scriptsize [0.91, 1]} & & 8 & 1.00~{\scriptsize [0.91, 1]}\\
\midrule
\multirow{12}{*}{emoji} 
& \multirow{4}{*}{$\epsilon$} & 0.1 & 0.93~{\scriptsize [0.79, 0.98]} & \multirow{4}{*}{$\epsilon$} & 0.1 & 0.95~{\scriptsize [0.83, 0.99]}\\
& & 0.5 & 1.00~{\scriptsize [0.91, 1]} & & 0.5 & 1.00~{\scriptsize [0.91, 1]}\\
& & 0.8 & 1.00~{\scriptsize [0.91, 1]} & & 0.8 & 1.00~{\scriptsize [0.91, 1]}\\
& & 1 & 1.00~{\scriptsize [0.91, 1]} & & 1 & 1.00~{\scriptsize [0.91, 1]}\\
\noalign{\vskip 1pt}\cline{2-7}\noalign{\vskip 1pt} 
& \multirow{4}{*}{$\beta$} & 1\% & 1.00~{\scriptsize [0.91, 1]} & \multirow{4}{*}{$\beta$} & 1\% & 1.00~{\scriptsize [0.91, 1]}\\
& & 5\% & 1.00~{\scriptsize [0.91, 1]} & & 5\% & 1.00~{\scriptsize [0.91, 1]}\\
& & 7.5\% & 1.00~{\scriptsize [0.91, 1]} & & 7.5\% & 1.00~{\scriptsize [0.91, 1]}\\
& & 10\% & 1.00~{\scriptsize [0.91, 1]} & & 10\% & 1.00~{\scriptsize [0.91, 1]}\\\noalign{\vskip 1pt}\cline{2-7}\noalign{\vskip 1pt}  
& \multirow{4}{*}{$r$} & 1 & 1.00~{\scriptsize [0.91, 1]} & \multirow{4}{*}{$r^\prime$} & 2 & 1.00~{\scriptsize [0.91, 1]}\\
& & 2 & 1.00~{\scriptsize [0.91, 1]} & & 4 & 1.00~{\scriptsize [0.91, 1]}\\
& & 5 & 1.00~{\scriptsize [0.91, 1]} & & 6 & 1.00~{\scriptsize [0.91, 1]}\\
& & 10 & 1.00~{\scriptsize [0.91, 1]} & & 8 & 1.00~{\scriptsize [0.91, 1]}\\
\midrule
\multirow{12}{*}{fire}
& \multirow{4}{*}{$\epsilon$} & 0.1 & 1.00~{\scriptsize [0.91, 1]} & \multirow{4}{*}{$\epsilon$} & 0.1 & 1.00~{\scriptsize [0.91, 1]}\\
& & 0.5 & 1.00~{\scriptsize [0.91, 1]} & & 0.5 & 1.00~{\scriptsize [0.91, 1]}\\
& & 0.8 & 1.00~{\scriptsize [0.91, 1]} & & 0.8 & 1.00~{\scriptsize [0.91, 1]}\\
& & 1 & 1.00~{\scriptsize [0.91, 1]} & & 1 & 1.00~{\scriptsize [0.91, 1]}\\
\noalign{\vskip 1pt}\cline{2-7}\noalign{\vskip 1pt} 
& \multirow{4}{*}{$\beta$} & 1\% & 1.00~{\scriptsize [0.91, 1]} & \multirow{4}{*}{$\beta$} & 1\% & 1.00~{\scriptsize [0.91, 1]}\\
& & 5\% & 1.00~{\scriptsize [0.91, 1]} & & 5\% & 1.00~{\scriptsize [0.91, 1]}\\
& & 7.5\% & 1.00~{\scriptsize [0.91, 1]} & & 7.5\% & 1.00~{\scriptsize [0.91, 1]}\\
& & 10\% & 1.00~{\scriptsize [0.91, 1]} & & 10\% & 1.00~{\scriptsize [0.91, 1]}\\\noalign{\vskip 1pt}\cline{2-7}\noalign{\vskip 1pt}  
& \multirow{4}{*}{$r$} & 1 & 1.00~{\scriptsize [0.91, 1]} & \multirow{4}{*}{$r^\prime$} & 2 & 1.00~{\scriptsize [0.91, 1]}\\
& & 2 & 1.00~{\scriptsize [0.91, 1]} & & 4 & 1.00~{\scriptsize [0.91, 1]}\\
& & 5 & 1.00~{\scriptsize [0.91, 1]} & & 6 & 1.00~{\scriptsize [0.91, 1]}\\
& & 10 & 1.00~{\scriptsize [0.91, 1]} & & 8 & 1.00~{\scriptsize [0.91, 1]}\\
\bottomrule
\end{tabular}
\end{table}

\section{Attack Recovery of LDP Post-processing}
Post-processing was previously intended to remove excessive LDP noise to boost utility in a compliant environment without attack concerns. Depending on underlying data tasks and available resources, service providers may not be able to recollect data when the \ASD detection result is positive. Therefore, it is crucial to select a post-processing method that can reconstruct as many characteristics of the original data from polluted LDP estimates as possible. In this section, we empirically study the data recoverability of state-of-the-art LDP post-processing methods introduced in Section~\ref{existing_post_methods}, including Norm-Sub and Base-Cut \cite{wang2019locally} for regular no-attack scenarios and the one specifically designed for attack recovery, i.e., Normalization \cite{cao2021data} and LDPRecover \cite{sun2024ldprecover}. In addition, we propose a new post-processing method, \textit{robust segment normalization} (\RSN), which strikes a balance between robustness and accuracy. We first describe the design of \RSN below.

\begin{table}[tb]

\centering
\small
\caption{Time cost (ms) of \ASD against APA. Default $\epsilon = 0.5$, $r = 10$, and $\beta = 0.1$.}
\label{tab_app:running_time_ASD_APA}
\begin{tabular}{c|cccccccc}
\toprule
\multirow{1}{*}{Dataset} & \multirow{1}{*}{} & \multirow{1}{*}{} & \multicolumn{1}{c}{OUE} & \multicolumn{1}{c}{OLH-User} & \multicolumn{1}{c}{HST-User} \\
\midrule
\multirow{6}{*}{zipf} 
& \multirow{3}{*}{$\epsilon$} & 0.1 & 539 & 546 & 554\\
& & 0.5 & 375 & 370 & 356\\
& & 1 & 360 & 369 & 348 &\\ \noalign{\vskip 2pt}\cline{2-6}\noalign{\vskip 2pt}
& \multirow{3}{*}{$\beta$} & 1\% & 542 & 539 & 525\\
& & 5\% & 363 & 366 & 347\\
& & 10\% & 349 & 351 & 348\\
\midrule
\multirow{6}{*}{emoji} 
& \multirow{3}{*}{$\epsilon$} & 0.1 & 592 & 592 & 615\\
& & 0.5 & 451 & 428 & 486\\
& & 1 & 416 & 421 & 411 &\\ \noalign{\vskip 2pt}\cline{2-6}\noalign{\vskip 2pt}
& \multirow{3}{*}{$\beta$} & 1\% & 597 & 591 & 735\\
& & 5\% & 424 & 411 & 411\\
& & 10\% & 411 & 410 & 411\\
\midrule
\multirow{6}{*}{fire} 
& \multirow{3}{*}{$\epsilon$} & 0.1 & 514 & 557 & 631\\
& & 0.5 & 283 & 294 & 290\\
& & 1 & 244 & 249 & 241 &\\ \noalign{\vskip 2pt}\cline{2-6}\noalign{\vskip 2pt} 
& \multirow{3}{*}{$\beta$} & 1\% & 449 & 454 & 481\\
& & 5\% & 274 & 272 & 287\\
& & 10\% & 348 & 318 & 286\\
\bottomrule
\end{tabular}
\end{table}

\subsection{Robust Segment Normalization}\label{RSN_analysis}
In Norm-Sub, $\Delta$ introduces positive bias for low frequencies and negative bias for high frequencies. This approach may reduce the frequencies of all non-target items to 0 under a substantial attack; normalization reduces the attack influences by scaling down all frequency estimates multiplicatively and introduces significant negative errors to high frequencies. 

Our RSN is inspired by existing methods and aims to control the incurred bias more precisely. We observe that the perturbation errors within high-frequency estimates $\mathcal{H}$ are relatively small compared to their true frequencies. Thus, a minimal error adjustment is needed there. On the other hand, we want to find a region $\mathcal{L}$ that contains $\tilde{C}_i$ whose corresponding frequency after LDP may become negative.  We only add $\Delta$ to this region instead of the full domain in Norm-Sub to minimize the impact of bias on high-frequency estimates. Since $\tilde{C}_i\sim N(n\cdot f_i,\sigma_i^2)$ (see Section~\ref{Theoretical_ASD}), we may find a $f_i$ whose minimum perturbed count $\tilde{C}_i = 0$ via $n \cdot f_i - 2 \cdot \sigma_i = 0$.  $\mathcal{L}$ can be approximated by finding the items with perturbed counts less than $n \cdot f_i + 2 \cdot \sigma_i$. \RSN centers on recalibrating estimates within $\mathcal{L}$ since only those frequencies can contribute to negative values after LDP aggregation. This, therefore, reduces the impact of bias on high-frequency estimates. 
$\Delta$ can be derived by  
 $\sum_{\tilde{C}_i\in \mathcal{L}} \max(\tilde{C}_i + \Delta, 0) + \sum_{\tilde{C}_j\in \mathcal{H}} \max(\tilde{C}_j, 0) = \sum{\tilde{C}_i}$.
 In other words, we only adjust low frequencies to become non-negative while keeping high-frequency estimates unchanged. To further maintain a \textit{consistency} condition, the final result will be 
$\tilde{f}_i' = \frac{\tilde{C}_i}{\sum{\tilde{C}_i}}$.
We use multiplicative rather than additive correction to avoid the case where a non-target item in $\mathcal{H}$ is subtracted to 0. In this way, \RSN maintains the relative ratios among the high-frequency values, which are desired for many applications.

\begin{table}[tb]

\centering
\small
\caption{Time cost (ms) of \ASD for GRR. Default $\epsilon = 0.5$, $r = 10$, and $\beta = 0.1$.}
\label{tab_app:running_time_ASD_GRR}
\begin{tabular}{c|ccc|ccccc}
\toprule
\multirow{1}{*}{Dataset} & \multirow{1}{*}{} & \multicolumn{1}{c}{MGA} & \multicolumn{1}{c|}{} & \multirow{1}{*}{} & \multicolumn{1}{c}{MGA-A} & \multirow{1}{*}{}  \\
\midrule
\multirow{9}{*}{zipf} 
& \multirow{3}{*}{$\epsilon$} & 0.1 & 530 & \multirow{3}{*}{$\epsilon$} & 0.1 & 619\\
& & 0.5 & 380 & & 0.5 & 381\\
& & 1 & 372 & & 1 & 378\\\noalign{\vskip 1pt}\cline{2-7}\noalign{\vskip 1pt}
& \multirow{3}{*}{$\beta$} & 1\% & 530 & \multirow{3}{*}{$\beta$} & 1\% & 543\\
& & 5\% & 357 & & 5\% & 359\\
& & 10\% & 355 & & 10\% & 356\\\noalign{\vskip 1pt}\cline{2-7}\noalign{\vskip 1pt} 
& \multirow{3}{*}{$r$} & 1 & 621 & \multirow{3}{*}{$r^\prime$} & 2 & 533\\
& & 5 & 363 & & 6 & 362\\
& & 10 & 363 & & 8 & 363\\
\midrule
\multirow{9}{*}{emoji} 
& \multirow{3}{*}{$\epsilon$} & 0.1 & 568 & \multirow{3}{*}{$\epsilon$} & 0.1 & 564\\
& & 0.5 & 421 & & 0.5 & 418\\
& & 1 & 425 & & 1 & 418\\\noalign{\vskip 1pt}\cline{2-7}\noalign{\vskip 1pt} 
& \multirow{3}{*}{$\beta$} & 1\% & 576 & \multirow{3}{*}{$\beta$} & 1\% & 572\\
& & 5\% & 403 & & 5\% & 403\\
& & 10\% & 402 & & 10\% & 405\\\noalign{\vskip 1pt}\cline{2-7}\noalign{\vskip 1pt}
& \multirow{3}{*}{$r$} & 1 & 571 & \multirow{3}{*}{$r^\prime$} & 2 & 581\\
& & 5 & 400 & & 6 & 403\\
& & 10 & 397 & & 8 & 400\\
\midrule
\multirow{9}{*}{fire} 
& \multirow{3}{*}{$\epsilon$} & 0.1 & 469 & \multirow{3}{*}{$\epsilon$} & 0.1 & 461\\
& & 0.5 & 302 & & 0.5 & 303\\
& & 1 & 295 & & 1 & 295\\\noalign{\vskip 1pt}\cline{2-7}\noalign{\vskip 1pt}
& \multirow{3}{*}{$\beta$} & 1\% & 471 & \multirow{3}{*}{$\beta$} & 1\% & 542\\
& & 5\% & 289 & & 5\% & 284\\
& & 10\% & 289 & & 10\% & 282\\\noalign{\vskip 1pt}\cline{2-7}\noalign{\vskip 1pt}
& \multirow{3}{*}{$r$} & 1 & 455 & \multirow{3}{*}{$r^\prime$} & 2 & 473\\
& & 5 & 285 & & 6 & 290\\
& & 10 & 282 & & 8 & 291\\
\bottomrule
\end{tabular}
\end{table}

\begin{figure*}[htbp]
\centering
\includegraphics[scale=0.35]{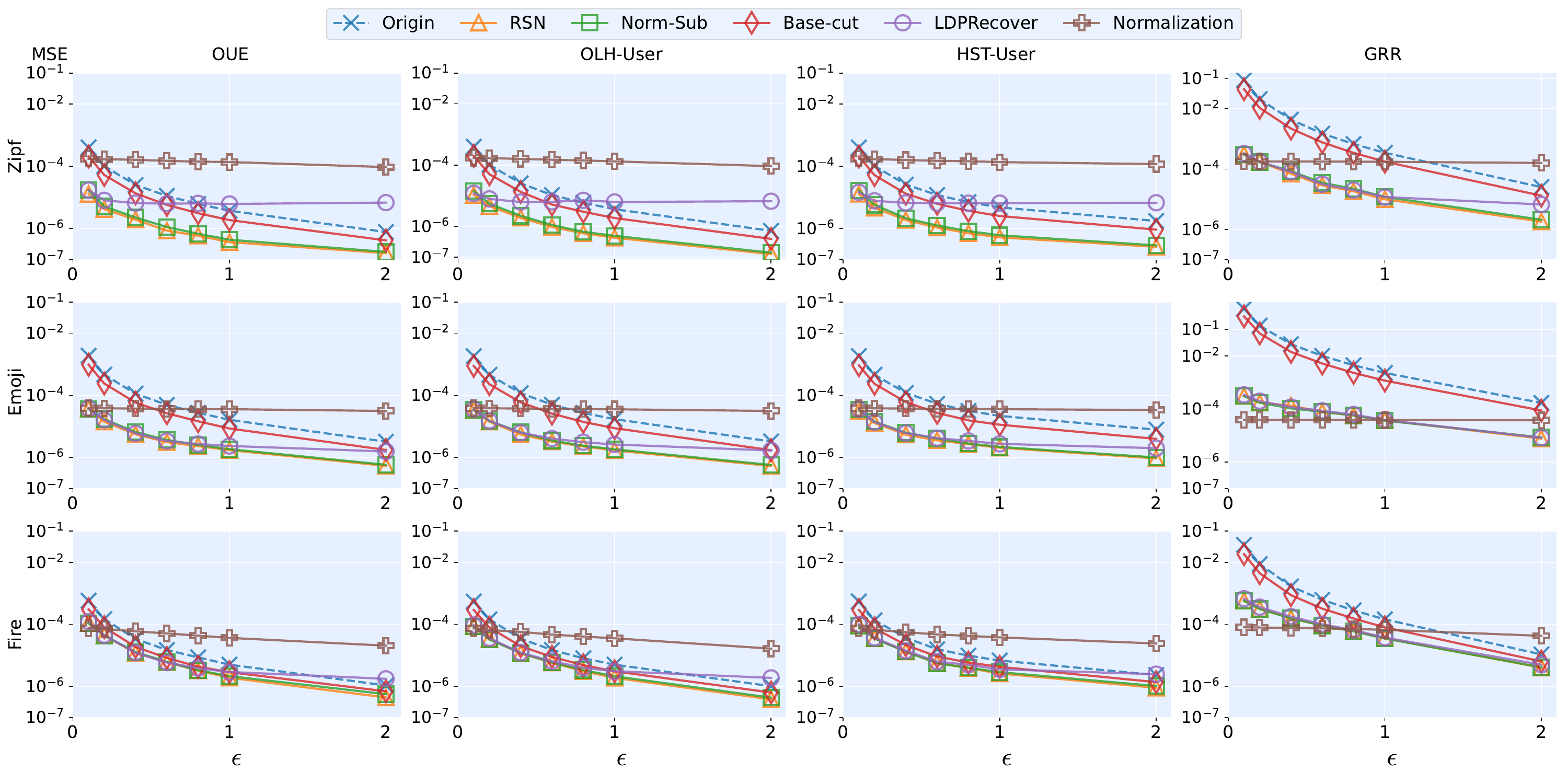}
 \caption{Utility boost of different post-processing methods without attack.}
 \label{Recovery_noattack_MSE}
 \Description{Utility boost of different post-processing methods without attack.}
\end{figure*}

\begin{figure*}[htbp]
\centering
\includegraphics[scale=0.35]{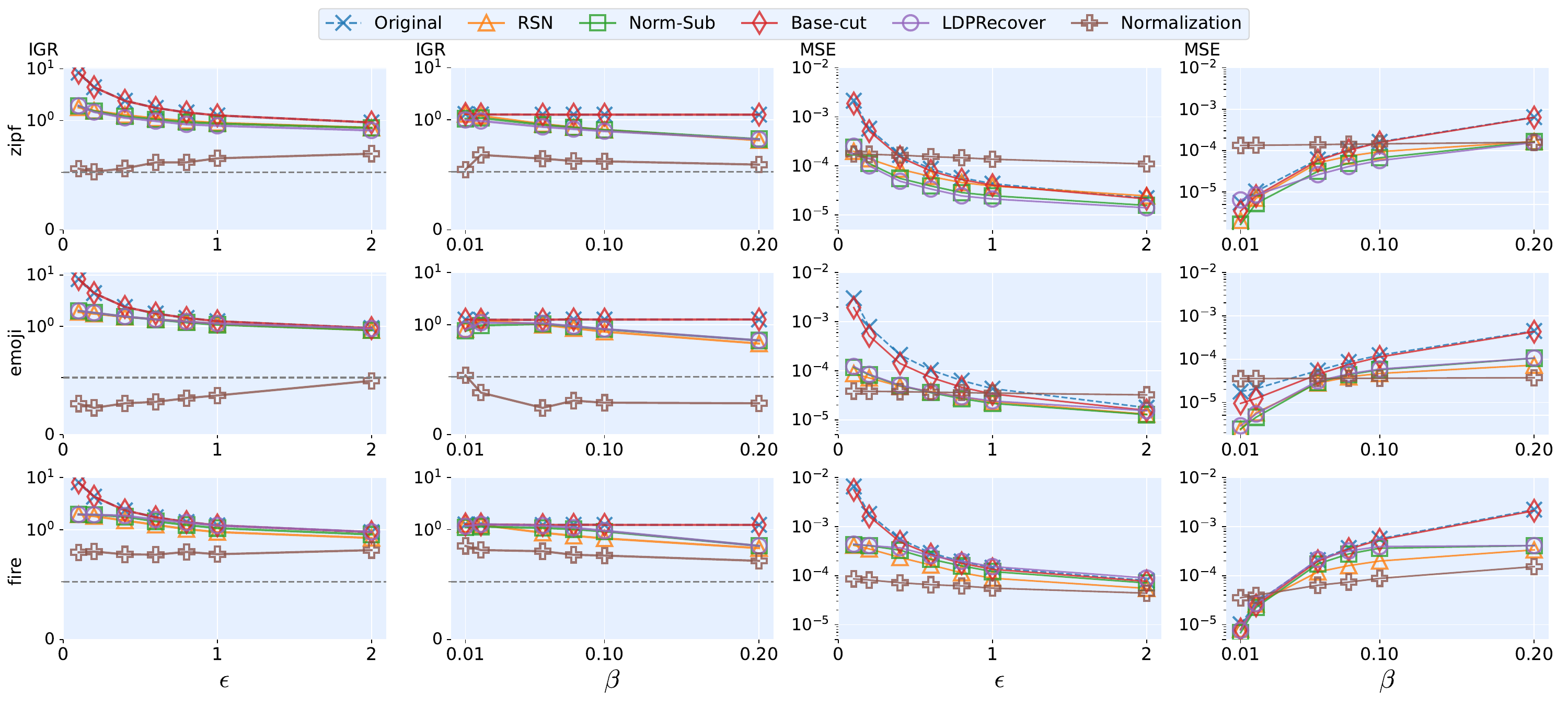}
 \caption{Utility recoverability of different post-processing methods under APA attack on OUE. }
 \label{Recovery_APA_IGR&MSE_OUE}
 \Description{Recoverability of different post-processing methods under APA attack on OUE.}
\end{figure*}


\begin{figure*}[htbp]
\centering
\includegraphics[scale=0.35]{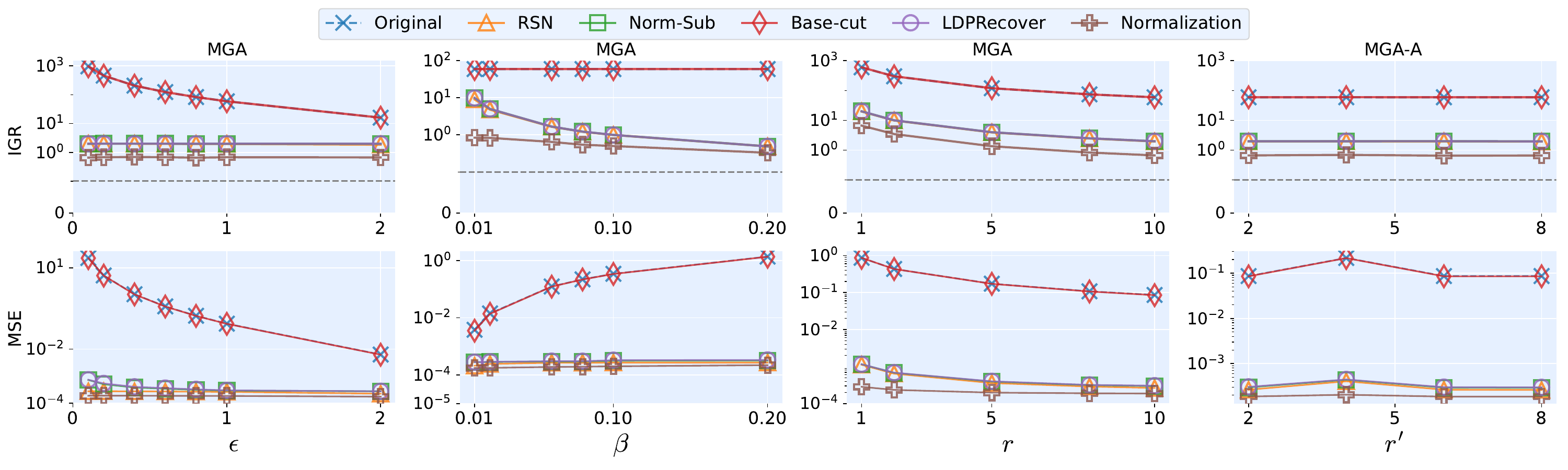}
 \caption{Utility recoverability of different post-processing methods under MGA and MGA-A on GRR in zipf.}
 \label{Recovery_GRR_IGR&MSE_zipf}

\end{figure*}

 \begin{figure*}[htbp]
 \centering
 \includegraphics[scale=0.35]{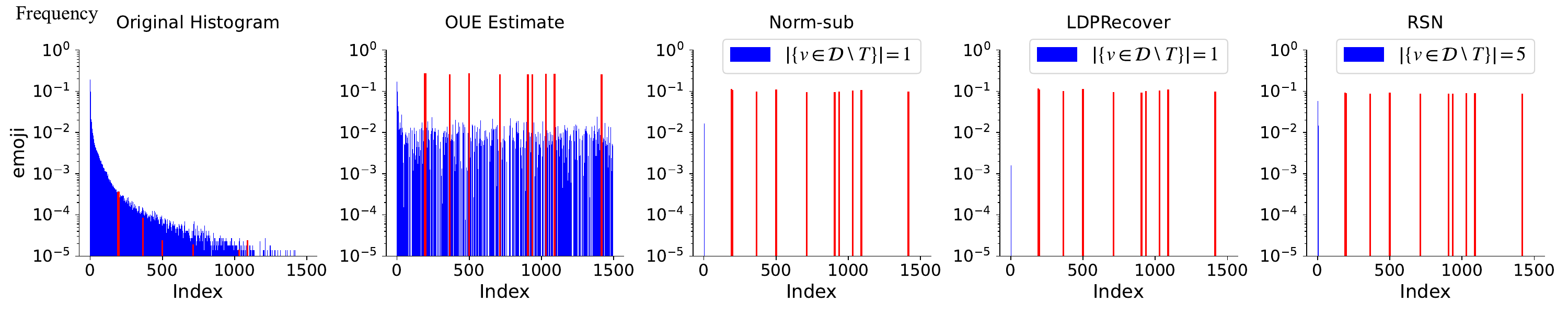}
   \caption{Comparison of histograms 
   after Norm-Sub, LDPRecover, and RSN. The target item frequencies are highlighted in red. $\epsilon = 0.5, r = 10$ and $\beta = 0.05$.}
   \label{hist_post}
 \end{figure*}


\subsection{Evaluation}
We empirically study the attack resilience and utility boost of state-of-the-art post-processing methods. The experimental setting remains the same as the detection evaluation in Section~\ref{sec_detection_evaluation}. The results are an average of 10 trials.

\subsubsection{Utility Boost after Fake User Removal} We first consider the situation where MGA and MGA-A attacks are applied to OUE, OLH and HST. The attack influence can be effectively minimized by our detection and thus the underlying data is considered to be clean. We also consider GRR without attack for a comprehensive evaluation of major CFOs.
We show the results for different datasets in Figure~\ref{Recovery_noattack_MSE}.

Overall, our \RSN performs best by exhibiting the lowest MSE in most settings.  LDPRecover and Normalization underperform the regular post-processing approaches, which is expected. The performance of the attack-centered methods is less affected by $\epsilon$ compared to other methods whose MSE decreases as $\epsilon$ grows. It is interesting to see that for small $\epsilon$ LDPRecover performs better than Base-Cut recommended in prior work \cite{wang2019locally}. 

In GRR, Normalization shows lower MSE than other methods when $\epsilon$ is small, which is more obvious with emoji and fire. This is because the high variance of GRR and large noise significantly affect the accuracy of the LDP result, on which the approaches, such as Norm-Sub, rely for utility optimization, while Normalization largely ignores. This is evidenced by consistent MSE values (i.e., about $10^{-4}$) of Normalization for all LDP protocols. 

\subsubsection{Attack Recovery} We consider the utility recoverability of the post-processing methods for the APA attack as well as MGA and MGA-A on the GRR protocol.  

\vspace{2pt}
\textbf{Results for APA Attack.} 
As the results are consistent across different CFO protocols, we only discuss OUE in Figure~\ref{Recovery_APA_IGR&MSE_OUE} and leave others in Figure~\ref{Recovery_APA_IGR&MSE_OLHU} and \ref{Recovery_APA_IGR&MSE_HSTU} in Appendix~\ref{app:recovery}. 

For attack gain suppression, the experimental results are mixed for resilience-optimized methods. On the one hand, Normalization shows lower \IGR closer to the baseline given various $\epsilon$ and $\beta$.  On the other hand, LDPRecover does not show the expected advantages in reducing attack gain. Its performance is on par with that of Norm-Sub and \RSN in zipf, while it underperforms \RSN in emoji and fire. 

For utility boosting, despite superior performance in attack gain reduction, Normalization does not produce lower MSE than other methods due to its coarse-grained bias control. Rather, it again shows the context-agnostic nature of utility recovery with steady MSE values, rendering it an unreliable method. Since Norm-Sub, LDPRecover, and \RSN are all based on the \textit{consistency}, they exhibit similar performance in the experiment while \RSN exhibits lower MSE in the fire dataset.

\vspace{2pt}
\textbf{Results for MGA and MGA-A attack.} Here we only show MGA and MGA-A with varying $r'$ in zipf in Figure~\ref{Recovery_GRR_IGR&MSE_zipf} due to similar results in other settings (see Figure~\ref{Recovery_GRR_IGR} and \ref{Recovery_GRR_MSE} in Appendix~\ref{app:recovery}). 

Normalization outperforms other methods in reducing attack gain in all settings for GRR. LDPRecover, Norm-Sub and \RSN are similar in mitigating attack influence. Regarding MSE, all methods are close with slightly better performance of Normalization. Base-Cut shows the worst in both attack resilience and data accuracy. As $r^\prime$ increases in MGA-A, both \IGR and MSE remain stable for all methods, because each fake user in GRR can support only one target item in their report with probability $1/r$, independent of $r'$.

\subsubsection{Summary} It is interesting to see that the regular post-processing methods (e.g., Norm-Sub) perform similarly to state-of-the-art attack-focused methods (e.g., LDPRecover) in data recovery. It is speculated that the \textit{consistency} condition plays a crucial role, not only benefiting utility but also attack suppression. When there is no attack or the attack influence has been minimized (e.g., fake users can be identified or in the server setting of OLH and HST), our method \RSN is recommended for utility boosting. When a strong attack is present in OUE, OLH-User, and HST-User, \RSN and Norm-Sub are both preferred methods for data recovery. In general, Normalization is not recommended since it is not adaptive to the underlying data and LDP settings except when large errors are expected (e.g., GRR with large $d$ and small $\epsilon$). Our additional experiment in Figure~\ref{hist_post} shows that given similar MSE, \RSN retains more original high-frequency items compared to Norm-Sub and LDPRecover. As a result, \RSN would be a preferred post-processing method in general for balanced attack mitigation and data recovery. 


\section{Discussion} 
\vspace{2pt}
\textbf{Extension to Heavy Hitter Identification.} Prior study \cite{cao2021data} shows that PEM protocol for heavy hitter identification \cite{bassily2017practical,bassily2015local,wang2019locally_hh} is also vulnerable to MGA attack. As a result, the target items could be falsely recognized as the top-$k$. Our detection methods can help identify attack behaviors. Specifically, since PEM interactively applies OLH for partial vector perturbation, we may adopt \Diffstats to detect malicious users for each iteration. However, as the size of the encoded domain in PEM is smaller than that of the original item domain, it may affect the detection performance as discussed in Section~\ref{sec_dist_discrepancy}. Likewise, \ASD should also be applicable but could be less effective due to the focus on top-$k$ items in PEM, which prevents full domain aggregation analysis.

\vspace{2pt}
\noindent\textbf{Extension to Numerical Data.} The recent work \cite{li2024robustness} recommended the SW protocol over binning-based CFOs for numerical distribution estimation thanks to the more robust performance of SW against the data poisoning attack and better results of their proposed detection method on SW (i.e., higher AUC values for attack existence). Our research can be extended to enhance the practical robustness of the binning-based CFO protocols.  Specifically, CFOs with binning apply, for example, OUE or OLH, to estimate a histogram and adopt \textit{consistency} post-processing to enforce the distribution characteristics. Since the targeted attack in \cite{li2024robustness} is a variant of MGA, our proposed detection methods can be adapted to detect the attack and identify malicious users. Along with our post-processing method, we may significantly enhance the corrupted data recovery for vulnerable binning-based CFOs and thus provide reliable alternatives for numerical data in hostile environments. 

\section{Conclusion}
Data poisoning attacks pose an imminent threat to current LDP implementations and urge the research community to rethink the security implications in privacy-enhancing technologies. Our newly discovered attack strategies unveil the dynamics of the threat landscape. To mitigate the emerging adversaries, we propose novel detection methods with high detection accuracy and low overhead. In addition, our new post-processing method and follow-up analysis of existing approaches reveal key design principles and highlight the importance of LDP post-processing in attack resilience, which we believe will benefit the research in the future.



\section{Acknowledgments}
We would like to thank the Shepherd and anonymous reviewers for their insightful comments and guidance. This paper was supported in part by NSF grants CNS-2238680, CNS-2207204, and CNS-2247794.

\bibliography{section/cite}
\bibliographystyle{section/ACM-Reference-Format}

\begin{appendices}

\section{Proof of Theorem \ref{errorMGA}}
\label{app:proof_errorMGA}

\begin{proof}
\begin{align}
    \mathbb{E}\left[E_{sq}^{MGA}(k)\right]&= \mathbb{E}\left[(O^k_{MGA} - Y^k)^2\right] \nonumber \\
    &=\mathbb{E}\left[(O^k_{MGA})^2\right]- 2Y^k\mathbb{E}\left[O^k_{MGA}\right] + (Y^k)^2\nonumber \\
    &= (\mathbb{E}\left[(O^k_{MGA})^2\right]-\mathbb{E}\left[O^k_{MGA}\right]^2) \nonumber \\ &+(\mathbb{E}\left[O^k_{MGA}\right]^2 - 2Y^k\mathbb{E}\left[O^k_{MGA}\right] + (Y^k)^2)\nonumber \\
    &= Var(O^k_{MGA}) + (\mathbb{E}[O^k_{MGA}]-Y^k)^2 \nonumber
\end{align}
 
Let $W^k_{benign}\sim B(n-m, \ P(X=k))$ denote the frequency of bits set to 1 in the reports of $n-m$ benign users and $W^k_{fake}$ be the frequency of bits set to 1 in the reports of $m$ fake users. $W^k_{fake}=m$ if $k=l_g$ and $W^k_{fake}=0$ if $k\neq l_g$. Thus, the observed frequency is 
$O^k_{MGA} = W^k_{benign} + W^k_{fake}$. Note that $Y^k = nP(X=k)$.

Given the independence between $W_{benign}^k$ and $W_{fake}^k$, when $k=l_g$, we have 
$\mathbb{E}\left[O^k_{MGA}\right] = \mathbb{E}[W^k_{benign}] + m = (n-m)P(X=k) + m$ \ and 
$Var(O^k_{MGA}) = Var(W^k_{benign}) = (n-m)P(X=k)(1-P(X=k))$.
Therefore,
\begin{align}
    &\mathbb{E}\left[E_{sq}^{MGA}(k)\right] = Var(O^k_{MGA}) + (\mathbb{E}[O^k_{MGA}]-Y^k)^2 \nonumber \\
    &= (n-m)P(X=k)(1-P(X=k)) \nonumber \\
    &+((n-m)P(X=k)+m-nP(X=k))^2 \nonumber \\
    &= (n-m)P(X=k)(1-P(X=k)) + m^2\cdot(P(X=k)-1)^2. \nonumber
\end{align}
When $k\neq l_g$, since $W^k_{fake} = 0$, we have 
$\mathbb{E}\left[O^k_{MGA}\right] = \mathbb{E}[W^k_{benign}] = (n-m)P(X=k)$ and
$Var(O^k_{MGA}) = Var(W^k_{benign}) = (n-m)P(X=k)(1-P(X=k))$. Therefore,
\begin{align}
    &\mathbb{E}\left[E_{sq}^{MGA}(k)\right]= Var(O^k_{MGA}) + (\mathbb{E}[O^k_{MGA}]-Y^k)^2 \nonumber \\
    &= (n-m)P(X=k)(1-P(X=k)) \nonumber \\
    &+ ((n-m)P(X=k)-nP(X=k))^2 \nonumber \\
    &= (n-m)P(X=k)(1-P(X=k)) + m^2\cdot(P(X=k))^2. \nonumber
\end{align}
\end{proof}

\section{Proof of Theorem \ref{fiterrorMGA} \label{app:proof_fiterrorMGA}}
\begin{proof}
\begin{align}
    E_{freq}(O^k_{MGA}, Y^k) &= \sum_k\frac{(O^k_{MGA} - Y^k)^2}{Y^k}\nonumber\\
    &= \frac{(O^{l_g}_{MGA}-Y^{l_g})^2}{Y^{l_g}} + \sum\limits_{\substack{k = 0 \\ k \neq l_g}}^{d} \frac{(O^k_{MGA} - Y^k)^2}{Y^k}\nonumber
\end{align}
According to Theorem~\ref{errorMGA}, when $k = l_g$, we have 
\begin{align}
    &\frac{(O^{l_g}_{MGA}-Y^{l_g})^2}{Y^{l_g}} \nonumber\\
    &= \frac{m^2\cdot(P(X=l_g)-1)^2 + (n-m)P(X=l_g)(1-P(X=l_g))}{n\cdot P(X=l_g)}.\nonumber
\end{align}
When $k \neq l_g$, 
\begin{align}
    &\frac{(O^k_{MGA} - Y^k)^2}{Y^k} \nonumber\\
    &= \frac{m^2\cdot(P(X=k))^2+(n-m)P(X=k)(1-P(X=k))}{n\cdot P(X=k)} \nonumber\\
    &= \frac{m^2\cdot P(X=k)+(n-m)(1-P(X=k))}{n}.\nonumber
\end{align}
According to Eq.~\eqref{eq:fit_error} we have
\begin{align}
     & E_{freq}(O^k_{MGA}, Y^k) \nonumber\\
     &= \frac{m^2\cdot(P(X=l_g)-1)^2 + (n-m)P(X=l_g)(1-P(X=l_g))}{n\cdot P(X=l_g)} \nonumber\\
     &+ \sum\limits_{\substack{k = 0 \\ k \neq l_g}}^{d} \frac{m^2\cdot P(X=k)+(n-m)(1-p(X=k))}{n} \nonumber \\
     &= \frac{m^2}{n}\left(\frac{(P(X=l_g)-1)^2}{P(X=l_g)}+\sum\limits_{\substack{k = 0 \\ k \neq l_g}}^{d}P(X=k)\right) \nonumber\\
     &+ \frac{n-m}{n}\left((1-P(x=l_g)+\sum\limits_{\substack{k = 0 \\ k \neq l_g}}^{d}(1-p(X=k)))\right)\nonumber \\
     &= \frac{m^2}{n}\left(\frac{(P(X=l_g)-1)^2}{P(X=l_g)}+\sum\limits_{\substack{k = 0 \\ k \neq l_g}}^{d}P(X=k)\right) + \frac{(n-m)\cdot d}{n}. \nonumber
\end{align}
Given $\sum\limits_{k=0}^d P(X=k) = 1$, we have $\sum\limits_{\substack{k = 0 \\ k \neq l_g}}^{d}P(X=k) = 1 - P(X=l_g)$. Therefore, 
\begin{align}
    &E_{freq}(O^k_{MGA}, Y^k) \nonumber \\
    &= \frac{m^2}{n}\left(\frac{(P(X=l_g)-1)^2}{P(X=l_g)}+\sum\limits_{\substack{k = 0 \\ k \neq l_g}}^{d}P(X=k)\right) + \frac{(n-m)\cdot d}{n}\nonumber \\
    &= \frac{m^2}{n}\left(\frac{1}{P(X=l_g)}-1\right) + \frac{(n-m)\cdot d}{n}.\nonumber
\end{align}
\end{proof}

\section{Proof of Theorem \ref{errorAPA}} \label{app:proof_errorAPA}
\begin{proof}
For APA, the proof here is similar to that of Theorem~\ref{errorMGA}. We have  
$\mathbb{E}\left[E_{sq}^{APA}(k)\right] = Var(O^k_{APA}) + (\mathbb{E}[O^k_{APA}]-Y^k)^2$.

Let $W^k_{benign}\sim B(n-m, \ P(X=k))$ denote the frequency of bits set to 1 in the reports of $n-m$ benign users and $W^k_{fake}$ be the frequency of bits set to 1 in the reports of $m$ fake users. For APA, $W^k_{fake}=\omega[k]$. Thus, the observed frequency is 
$O^k_{APA} = W^k_{benign} + W^k_{fake}$. Note that $Y^k = nP(X=k)$.

Since $W_{benign}^k$ is independent of $W_{fake}^k$, for each $k\in[0,d]$, we have $$\mathbb{E}\left[O^k_{APA}\right] = \mathbb{E}[W^k_{benign}] + \omega[k] = (n-m)P(X=k) + \omega[k]$$ and $$Var(O^k_{APA}) = Var(W^k_{benign}) = (n-m)P(X=k)(1-P(X=k)).$$
Therefore,
 \begin{align}
     \mathbb{E}\left[E_{sq}^{APA}(k)\right] &= \mathbb{E}\left[(O^k_{APA} - Y^k)^2\right]  \nonumber\\
     &= Var(O^k_{APA}) + (\mathbb{E}[O^k_{APA}]-Y^k)^2\nonumber \\
     &= (n-m)P(X=k)(1-P(X=k)) \nonumber \\
     &+ (m \cdot P(X=k) - \omega[k])^2. \nonumber
 \end{align}
\end{proof}

\section{Proof of Theorem \ref{fiterrorAPA} \label{app:proof_fiterrorAPA}}
\begin{proof}
 \begin{align}
     E_{freq}(O^k_{APA}, Y^k) &= \sum_k\frac{(O^k_{APA} - Y^k)^2}{Y^k}\nonumber\\
     &= \sum\limits_{\substack{k = 0}}^{d} \frac{(O^k_{APA} - Y^k)^2}{Y^k}\nonumber
 \end{align}
 According to Theorem~\ref{errorAPA}, 
 \begin{align}
     &\frac{(O^k_{APA} - Y^k)^2}{Y^k} \nonumber \\
     &= \frac{(m \cdot P(X=k) - \omega[k])^2 + (n-m)P(X=k)(1-P(X=k))}{n\cdot P(X=k)}\nonumber
 \end{align}
As per Eq.~\eqref{eq:fit_error}, we have
 \begin{align}
      &E_{freq}(O^k_{APA}, Y^k) \nonumber \\
      &= \sum\limits_{\substack{k = 0}}^{d} \frac{(m \cdot P(X=k) - \omega[k])^2 + (n-m)P(X=k)(1-P(X=k))}{n\cdot P(X=k)} \nonumber \\
      &= \frac{1}{n}\sum\limits_{\substack{k = 0}}^{d} \frac{(m \cdot P(X=k) - \omega[k])^2}{P(X=k)}+\frac{n-m}{n}\sum\limits_{\substack{k = 0}}^{d}(1-P(X=k))\nonumber \\
      &= \frac{1}{n}\sum\limits_{\substack{k = 0}}^{d} \frac{(m \cdot P(X=k) - \omega[k])^2}{P(X=k)} + \frac{(n-m)\cdot d}{n}. \nonumber
 \end{align}
\end{proof}

\section{Proof of Theorem \ref{errorComp}} \label{app:proof_errorComp}
\begin{proof}

Given Theorems \ref{errorMGA} and \ref{errorAPA}, when $k = l_g$ we have 
\begin{align}
    &\mathbb{E}\left[E^{MGA}_{sq}(k)\right] - \mathbb{E}\left[E^{APA}_{sq}(k)\right] \nonumber \\
    &=  m^2 \cdot(P(X=k) - 1)^2 - (m \cdot P(X=k) - \omega[k])^2   \nonumber \\
    &= m^2\cdot P(X=k)^2 - 2m^2\cdot P(X=k) + m^2 \nonumber \\
    &- (m^2P(X=k)^2-2m\cdot \omega[k] \cdot P(X=k) + \omega[k]^2)\nonumber \nonumber \\
    &=-2m^2\cdot P(X=k) + m^2 + 2m\cdot \omega[k] \cdot P(X=k) - \omega[k]^2 \nonumber \\
    &= (\omega[k] - m) (m \cdot (2P(X=k)-1)-\omega[k]) \nonumber
\end{align}
According to the De Moivre-Laplace theorem \cite{de2020doctrine}, we can approximate $X\sim \mathcal{N}(\mu,\sigma)$ using a normal approximation, where $\mu = dp$ and $\sigma = \sqrt{dp(1-p)}$ with a typical requirement of $dp > 5$. Note that it is common in CFOs to have a large $d$ (e.g., $d>100$). For $\tilde{p}=\frac{p+(d-1)q}{d}$ and $q=\frac{1}{e^\epsilon + 1}\in(0,\frac{1}{2})$, $dp>5$ is easily satisfied when $\tilde{p}\approx q$. Thus, $\mathbb{E}\left[P(X=k=l_g)\right] = \mathbb{E}\left[f(x=\mu)\right] = \frac{1}{\sigma\sqrt{2\pi}}\leq \frac{1}{\sqrt{2\pi}} \approx 0.4$. For $\omega[k=l_g] \in [0, m]$, $R = \mathbb{E}\left[E^{MGA}_{sq}(k)\right] - \mathbb{E}\left[E^{APA}_{sq}(k)\right] $ is 0 when $\omega[k=l_g] = m$; $R>0$ when $\omega[k=l_g] < m$. 
\end{proof}

\section{Proof of Theorem \ref{threshold_ASD}}\label{app:proof_threshold_ASD}
\begin{proof}
    Since $\tilde{C}_i \sim N(n\cdot f_i, \ \sigma_i^2)$, we can obtain
, $\xi(\gamma) = \mu_i + Z_\gamma \cdot \sigma_i$ for the item $i$ and a given confidence level $\gamma$, where $\mu_i = n \cdot f_i$ and $Z_\gamma = \Phi^{-1}(\frac{1+\gamma}{2})$ is the $z$-score corresponding to the confidence level $\gamma$ under the standard normal distribution. This upper bound ensures that $\tilde{C}_i$ corresponding to its true frequency $\mu_i$ lies within the interval with the specified confidence level. For items with $f_i = 0$, $\tilde{C}_i$ follows $N(0,\sigma_i^2)$ and $\sigma_i^2 = \frac{n(q(1-q))}{(p-q)^2}$. As a result, the upper bound of $\mathcal{B}$ is $\xi(\gamma) = Z(\gamma) \cdot \sqrt{\frac{nq(1-q)}{(p-q)^2}}$ with confidence $\gamma$.
\end{proof}

\section{Error Analysis of \ASD}\label{app:ASD_error}


    We analyze the error $Err$ in \ASD below. Since $\tilde{C}_i \sim N(n\cdot f_i, \ \sigma_i^2)$, for a given $\tilde{C}_j$, its probability of coming from different  frequencies $f_i$ is 
\begin{equation*}
    f(\tilde{C}_j|f_i) = \frac{1}{\sqrt{2\pi \frac{\sigma_i^2}{(p-q)^2}}}\exp\left(-\frac{(\tilde{C}_j-n\cdot f_i)^2}{2\frac{\sigma_i^2}{(p-q)^2}}\right)
\end{equation*}
Therefore, we can derive the error $Err$
\begin{align}\label{true_error}
    Err &= \mathbb{E} \left[n - n\cdot \sum_{i=1}^{d}\tilde{C}_i \cdot \Pr(\tilde{C}_i>\xi(\gamma)|f_i)\right]\nonumber\\ 
    &= n - \sum_{i=1}^{d}\left[n\cdot f_i \cdot (1-\Pr(\tilde{C}_i>\xi(\gamma)|f_i)) + \sigma_i \phi\left(\frac{\xi(\gamma)-n\cdot f_i}{\sigma_i}\right)\right]\nonumber\\ 
    &= n\cdot \sum_{i=1}^{d}f_i \cdot \Pr(\tilde{C}_i<\xi(\gamma)|f_i)- \sum_{i=1}^{d} \sigma_i \phi\left(\frac{\xi(\gamma)-n\cdot f_i}{\sigma_i}\right) \nonumber  \\  
    &= n\cdot \sum_{i=1}^{d}f_i \cdot \int_{-\infty}^{\xi(\gamma)} f(\tilde{C}_i|f_i) \ d\tilde{C}_j - \sum_{i=1}^{d} \sigma_i \phi\left(\frac{\xi(\gamma)-n\cdot f_i}{\sigma_i}\right) \nonumber \\
    &= n\cdot \sum_{i=1}^{d}f_i \cdot \int_{-\infty}^{\xi(\gamma)} \frac{1}{\sqrt{2\pi \frac{\sigma_i^2}{(p-q)^2}}}\exp\left(-\frac{(\tilde{C}_i-n\cdot f_i)^2}{2\frac{\sigma_i^2}{(p-q)^2}} \right)\nonumber\\
    &- \sum_{i=1}^{d} \sigma_i \phi\left(\frac{\xi(\gamma)-n\cdot f_i}{\sigma_i}\right) 
\end{align}
where $\phi(z)$ is the probability density function of the standard normal distribution. As the detector lacks knowledge of the genuine frequency $f_i$, it is challenging to compute the accurate $Err$.

\section{Additional Results of Recovery}\label{app:recovery}

Figures~\ref{Recovery_APA_IGR&MSE_OLHU} and~\ref{Recovery_APA_IGR&MSE_HSTU} present additional recovery results for OLH-User and HST-User. LDPRecover performs similarly to Norm-Sub and \RSN with the zipf dataset but falls behind \RSN with emoji and fire. \RSN consistently demonstrates better \MSE performance, particularly with fire.


\begin{figure*}[htbp]
\centering
\includegraphics[scale=0.33]{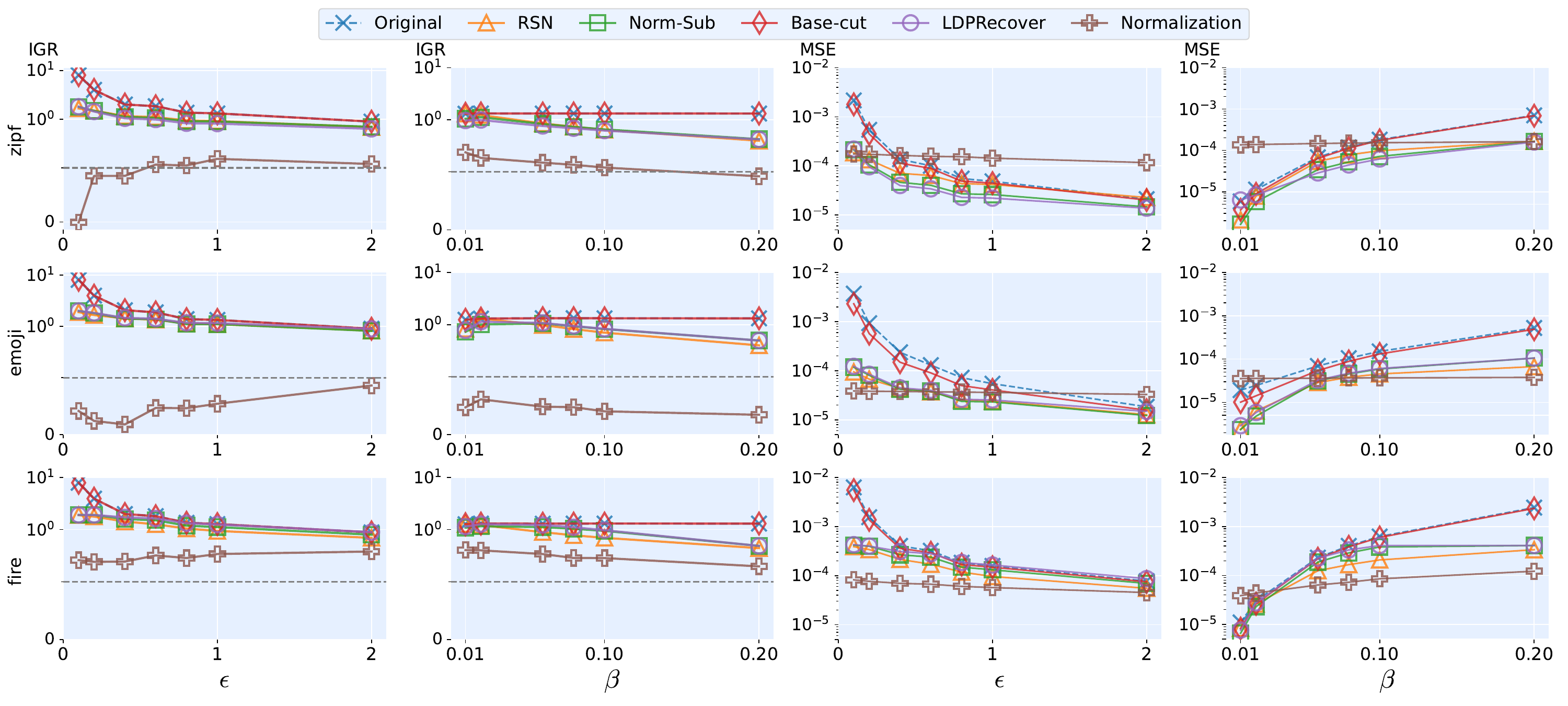}
 \caption{Utility recoverability of different post-processing methods under APA attack on OLH-User.}
 \label{Recovery_APA_IGR&MSE_OLHU}
 \Description{Recoverability of different post-processing methods under APA attack on OLH-User.}
\end{figure*}

\begin{figure*}[htbp]
\centering
\includegraphics[scale=0.35]{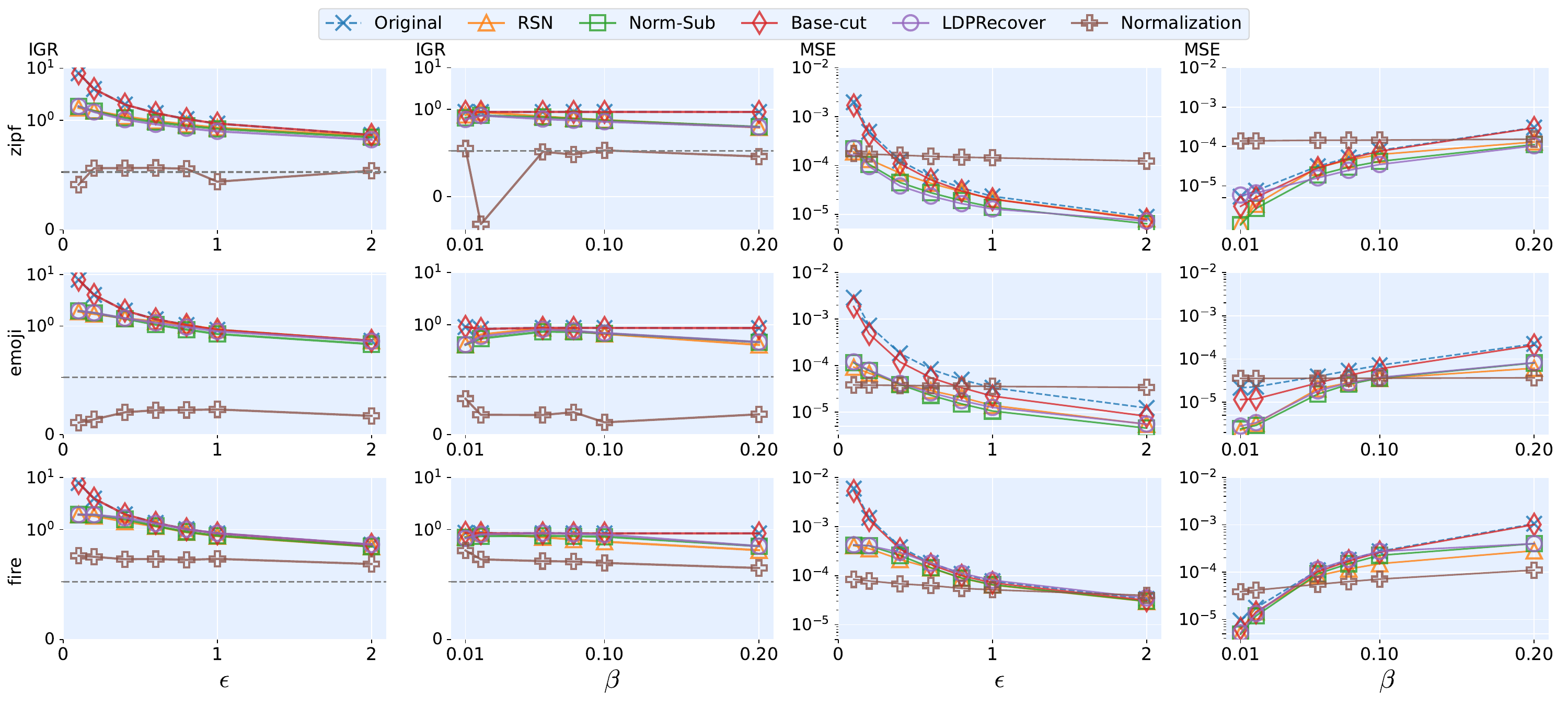}
 \caption{Utility recoverability of different post-processing methods under APA attack on HST-User.}
 \label{Recovery_APA_IGR&MSE_HSTU}
 \Description{Recoverability of different post-processing methods under APA attack on HST-User.}
\end{figure*}

Figures~\ref{Recovery_GRR_IGR} and~\ref{Recovery_GRR_MSE} present the complete recovery results for MGA and MGA-A attacks under the GRR protocol with all three datasets. The results are in line with the conclusion in Section~\ref{sec:Results_ASD}.

\begin{figure*}[htbp]
 \centering
 \includegraphics[scale=0.35]{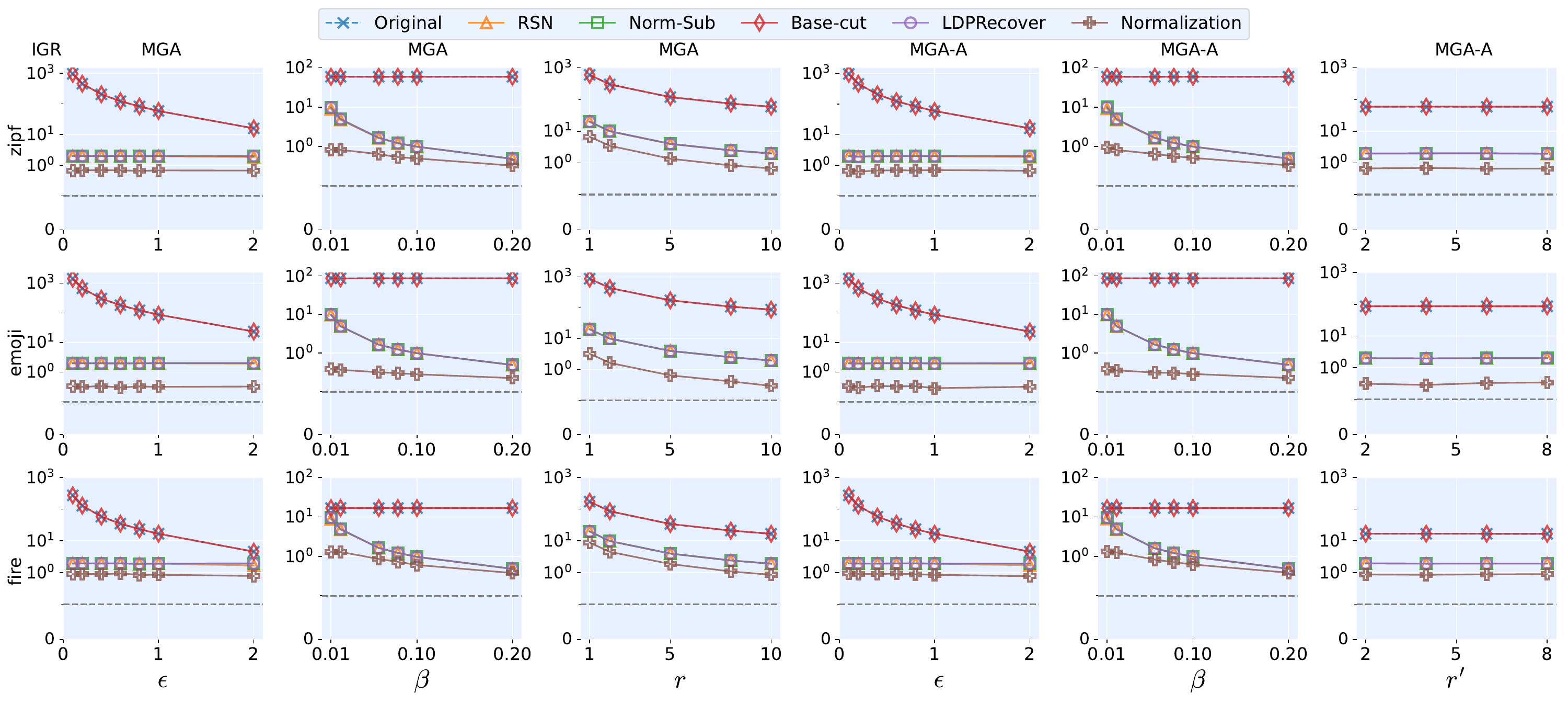}
  \caption{IGR for different post-processing methods under MGA and MGA-A attacks on GRR with emoji and fire. }
  \label{Recovery_GRR_IGR}
  \Description{IGR for different post-processing methods under MGA and MGA-A attacks on GRR with emoji and fire.} 
 \end{figure*}

 \begin{figure*}[htbp]
 \centering
 \includegraphics[scale=0.35]{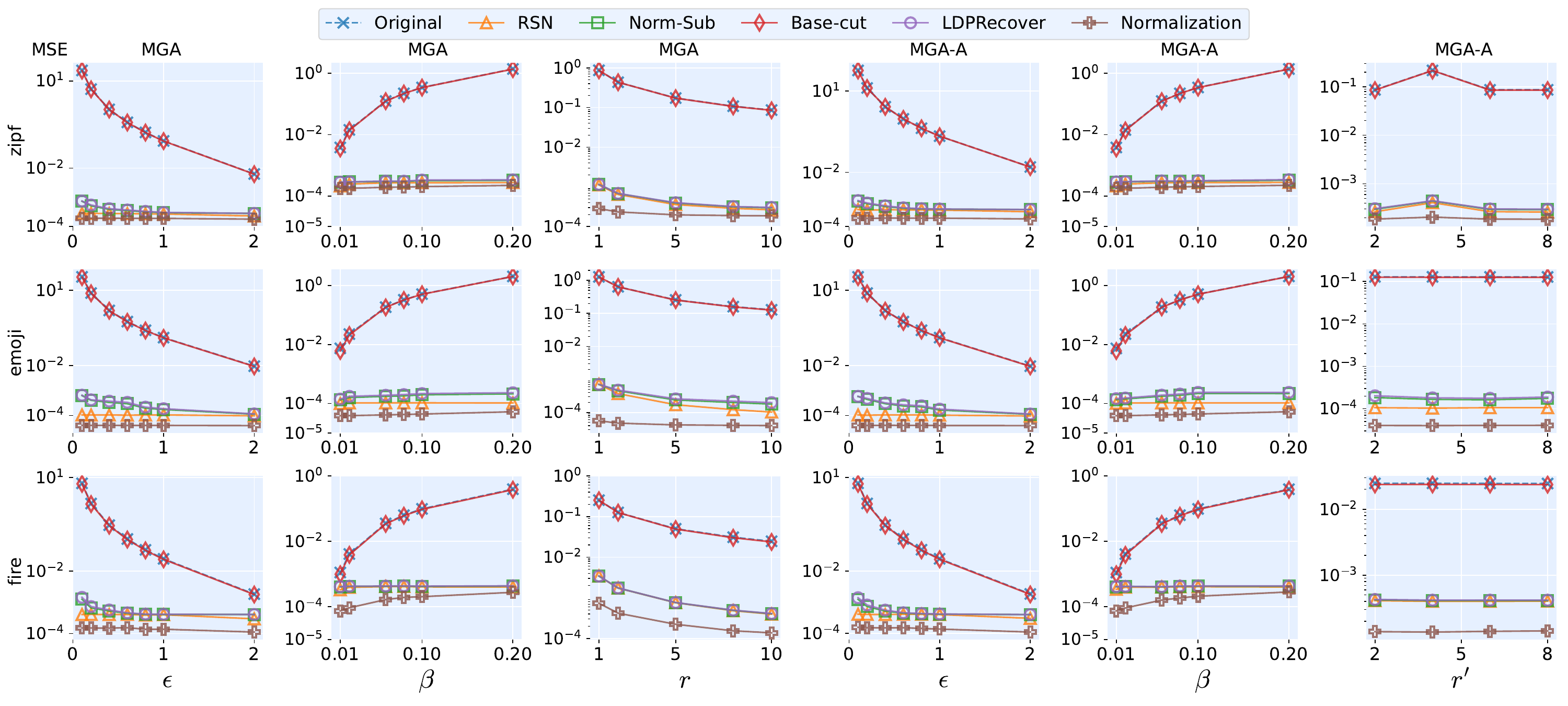}
  \caption{MSE for different post-processing methods under MGA and MGA-A attacks on GRR with emoji and fire. }
  \label{Recovery_GRR_MSE}
  \Description{MSE for different post-processing methods under MGA and MGA-A attacks on GRR with emoji and fire.}
 \end{figure*}





\end{appendices}
\end{document}